\newcommand{\algrule}[1][.2pt]{\par\vskip.5\baselineskip\hrule height #1\par\vskip.5\baselineskip}
\newcommand{\dn}{\textrm{d}}       
\newcommand{\dd}{\, \textrm{d}}       
\newcommand{\dint}{\int \!}       
\newtheorem{lemma}{Lemma}
\newtheorem{proposition}{Proposition}
\newtheorem{proof}{Proof}
\title{\bf Correlated pseudo-marginal Metropolis-Hastings\\ using quasi-Newton proposals}
\author{Johan Dahlin, Adrian~Wills and Brett~Ninness%
\thanks{E-mail adresses to authors: JD (corresponding author) \url{uni@johandahlin.com}, AW \url{adrian.wills@newcastle.edu.au} and BN \url{brett.ninness@newcastle.edu.au}. JD and AW are with the School of Engineering, The University of Newcastle, Callaghan, NSW 2308, Australia. BN is with the Faculty of Engineering and Built Environment, The University of Newcastle, Callaghan, NSW 2308, Australia.}%
}
\begin{document}
\maketitle

\noindent
\textbf{Abstract}

\noindent
Pseudo-marginal Metropolis-Hastings (pmMH) is a versatile algorithm for sampling from target distributions which are not easy to evaluate point-wise.
However, pmMH requires good proposal distributions to sample efficiently from the target, which can be problematic to construct in practice.
This is especially a problem for high-dimensional targets when the standard random-walk proposal is inefficient.
    \\ \\
We extend pmMH to allow for constructing the proposal based on information from multiple past iterations.
As a consequence, quasi-Newton (qN) methods can be employed to form proposals which utilize gradient information to guide the Markov chain to areas of high probability and to construct approximations of the local curvature to scale step sizes.
The proposed method is demonstrated on several problems which indicate that qN proposals can perform better than other common Hessian-based proposals. \\

\noindent \textbf{Keywords}

\noindent
Markov chain Monte Carlo, Bayesian inference, Riemann manifolds, Hessian estimation, Quasi-Newton methods, Machine learning.

\newpage


\section{Introduction}
\label{sec:introduction}
Sampling from some target distribution $\pi(\theta)$ with $\theta \in \Theta \subset \mathbb{R}$ is a common problem in statistics and Machine learning.
This problem arises when exploring the posterior in the Bayesian statistical paradigm \citep{Robert2007,GelmanCarlinSternDunsonVehtariRubin2013} or the cost function in a supervised learning problem \citep{Murphy2012,Ghahramani2015,WillsSchon2018}.

A standard choice is to make use of Metropolis-Hastings (MH; \citealp{RobertCasella2004}) to sample from $\pi(\theta)$.
This is done by constructing Markov chain which has the sought target distribution as its stationary distributions.
Samples from $\pi(\theta)$ can therefore be generated by simulating this chain.

However, this requires point-wise evaluation of $\pi(\theta)$ which can be intractable or computationally prohibitive.
The former can be the result of the model containing missing data or latent variables.
The latter is common in intractable likelihood models \citep{MarinPudloRobertRyder2011} and data-rich scenarios \citep{BardenetDoucetHolmes2017}.

In some situations, it is possible to circumvent this problem by changing the target into something that can be (efficiently) evaluated point-wise.
MH can then be used to sample from this so-called \textit{extended target},
\begin{align}
	\bar{\pi}(\theta, u)
	=
	\widehat{\pi}(\theta|u)
	\,
	m_{\theta}(u),
	\label{eq:extendedtarget}
\end{align}
where $\widehat{\pi}(\theta|u)$ denotes some unbiased estimator of the original target $\pi(\theta)$.
Here, we introduce some auxiliary variables denoted $u \in \mathcal{U}$ with density $m_{\theta}(u)$ to facilitate point-wise evaluation of the intractable target.
This is the core idea of the pseudo-marginal MH (pmMH; \citealp{AndrieuRoberts2009}) algorithm, which can be seen as an \emph{exact approximation} of the intractable ideal MH algorithm targeting $\pi(\theta)$.
That is, pmMH makes use of noisy evaluations from the target but still generates samples from the original target in the same manner as the exact ideal algorithm would.

A concrete example of a pmMH algorithm is when sequential Monte Carlo (SMC; \citealp{DelMoralDoucetJasra2006}) is employed to construct $\widehat{\pi}(\theta|u)$ as direct point-wise evaluation of $\pi(\theta)$ is not possible.
This leads to the particle MH (pMH; \citealp{AndrieuDoucetHolenstein2010}) algorithm, which allows for Bayesian inference in state-space models (SSMs).
Another example considered by \cite{WillsSchon2018} is to learn the weights in e.g., deep neural networks \citep{LeCunBengioHinton2015}, where direct evaluation of $\pi(\theta)$ is computationally prohibitive due to the data size.

The performance of pmMH depends on: (i) good parameter proposals and (ii) computationally efficient estimators $\widehat{\pi}(\theta|u)$.
These are important choices to research to promote a wider adoption of pmMH.

The main contribution of this paper is an extension of pmMH to allow the proposal to make use of information from multiple previous states.
Usually, the proposal can only depend on the last state of the Markov chain but this  requirement is relaxed by extending earlier work for MH \citep{RobertsGilks1994,ZhangSutton2011}.
This enables making use of quasi-Newton (qN; \citealp{NocedalWright2006}) methods from optimisation to construct efficient proposals.
Specifically, our contribution entails:
\begin{itemize}
	\item[-] two novel proposals based on symmetric rank-one (SR1) method with a novel trust-region approach and a tailored regularized least squares (LS) method,
	\item[-] a proof of the validity of pmMH with memory,
	\item[-] extensive numerical benchmarks showing the benefits of the novel proposals over standard choices.
\end{itemize}

The numerical benchmarks indicate the qN proposals can be used to compute accurate estimates of the Hessian with only a minor computational overhead.
As a result, the novel proposals can out-perform direct computations of the Hessian or the use of SMC methods in terms of computational speed and/or quality of the posterior estimates.
Hence, qN proposals can be a useful alternative to other Hessian-based proposals.

Most popular proposals in pmMH are currently based on discretisations of Langevin diffusions \citep{RobertsRosenthal1998,GirolamiCalderhead2011} which can be extended to pmMH as discussed by \cite{DahlinLindstenSchon2015a} and \cite{NemethSherlockFearnhead2016}.
These proposals require accurate estimates of the gradient and Hessian of the log-target to perform well.
In practice, this can be a problem as the Hessian can be difficult to estimate accurately which results in a computational bottleneck.

One approach to circumvent this problem is to employ qN methods to construct a local Hessian approximation from gradient information which is typically faster and easier to estimate accurately.
This idea has been employed for MH \citep{ZhangSutton2011}, pMH \citep{DahlinLindstenSchon2015b,DahlinWillsNinness2018b} and Hamiltonian Monte Carlo (HMC; \citealp{SimsekliBadeauCemgilRichard2016}).
This work extends the use of qN to pmMH, proves the validity of this idea and addresses some practical implementation issues.

The idea of adapting the proposal on-the-fly within MH and pmMH is common and has received a lot of attention by e.g., \cite{AndrieuThoms2008}.
A similar idea to the one presented in this paper is discuss for MH by e.g., \cite{GilksRobertsGeorge1994}, \cite{HaarioSaksmanTamminen2001} and \cite{CaiMeyerPerron2008}.
However, a major difference is that no gradient and local Hessian information is used when constructing the proposal.

\section{An overview of the proposed algorithm}
\label{sec:pmmh}
This section aims to introduce the pmMH algorithm with memory. We begin by briefly discussing standard pmMH to set the notation and then continue with presenting the extension which allows for making use of information from multiple previous iterations to construct good proposals.
General introductions to pmMH and pMH are given by \cite{AndrieuRoberts2009}, \cite{AndrieuDoucetHolenstein2010} and \cite{DahlinSchon2017}.

\subsection{Standard pseudo-marginal MH}
\label{sec:pmmh:standard}
Consider the extended target density in \eqref{eq:extendedtarget} on the space $\Theta \times \mathcal{U}$, where $\Theta$ denotes the space of interest with the target density $\pi(\theta)$.
As discussed in the introduction, it is sometimes not possible to implement MH directly to sample from $\pi(\theta)$ as point-wise evaluation of the density is not possible e.g., when the model contains latent, missing or unknown variables.

Instead, we introduce the auxiliary variables $u$ and form the estimator $\widehat{\pi}(\theta | u)$.
If this estimator is unbiased, then a MH algorithm targeting $\bar{\pi}(\theta, u)$ can generate samples from $\pi(\theta)$ by marginalisation.
To see why this scheme is valid, note that the unbiasedness of $\widehat{\pi}(\theta | u)$ can be expressed as
\begin{align}
	\pi(\theta)
	=
	\dint
	\bar{\pi}(\theta, u)
	\dd u
	=
	\dint
	\widehat{\pi}(\theta | u)
	\,
	m_{\theta}(u)
	\dd u.
	\label{eq:pmmh:unbiasedness}
\end{align}
This means that we can recover the target density by marginalisation of the extended target density.
Moreover, consider the MH algorithm targeting the extended target, which consists of two steps: (i) sampling a candidate from the proposal and then (ii) accepting/rejecting the candidate.
In the first step, the candidate is sampled from a proposal distribution,
\begin{align}
	q(\theta', u' | \theta_{k-1}, u_{k-1})
	=
	q(\theta' | \theta_{k-1}, u_{k-1})
	q(u' | u_{k-1}),
	\label{eq:pmmh:proposal}
\end{align}
where $q(\theta' | \cdot)$ and $q(u' | \cdot)$ are specified by the user and these choices are discussed in more detail in Section~\ref{sec:proposals}.
In the second step, the proposed parameters $\{\theta', u'\}$ are accepted or rejected with the probability,
\begin{align}
	\alpha_k
	&=
	1
	\wedge
	\frac{\bar{\pi}(\theta', u')}{\bar{\pi}(\theta_{k-1}, u_{k-1})}
	\frac{q(\theta_{k-1}, u_{k-1} | \theta', u')}{q(\theta', u' | \theta_{k-1}, u_{k-1})}
	\label{eq:pmmh:aprob}
	\\
	&=
	1
	\wedge
	\frac{\widehat{\pi}(\theta' | u')}{\widehat{\pi}(\theta_{k-1} | u_{k-1})}
	\frac{m_{\theta'}(u')}{m_{\theta_{k-1}}(u_{k-1})}
	\frac{q(\theta_{k-1}, u_{k-1} | \theta', u')}{q(\theta', u' | \theta_{k-1}, u_{k-1})},
	\nonumber
\end{align}
where the expanded expression is given to simplify the exposition in the subsequent sections and with $a \wedge b = \min(a,b)$.
Hence, we can run a standard MH algorithm targeting $\bar{\pi}(\theta, u)$ to obtain $\{\theta_k, u_k\}_{k=1}^K$.

We obtain pMH as a special case of pmMH when the target is the parameter posterior distribution,
\begin{align}
	\pi(\theta)
	\triangleq
	\frac{p(\theta) p(y|\theta)}{p(y)}
	=
	\frac
	{p(\theta) p(y|\theta)}
	{\dint p(\theta') p(y | \theta') \dd \theta'},
	\label{eq:parameter:posterior}
\end{align}
where $p(\theta)$, $p(y|\theta)$ and $p(y)$ denote the parameter prior, the likelihood and the marginal likelihood, respectively.
An SMC algorithm with $N$ particles is also used to construct $\widehat{p}^N(y | u, \theta)$ such that
\begin{align}
	\widehat{\pi}(\theta | u)
	=
	p(\theta)
	\widehat{p}^N(y | u, \theta).
	\label{eq:pmh:posterior:estimator}
\end{align}
The auxiliary variables $u \in \mathcal{U}$ are in this case all the random variables generated during a run of SMC or equivalently the particles and their ancestry lineage.
That is, $u$ contains all the information to reconstruct the estimator $\widehat{p}^N(y | u, \theta)$ and the SMC algorithm is deterministic given $u$.
We return to pMH in Section~\ref{sec:results:sv}.

\subsection{Allowing pmMH to use a memory}
\label{sec:pmmh:memory}
We can introduce memory into pmMH by forming the $M$-fold product of the extended target \eqref{eq:extendedtarget}.
This gives the \textit{extended product target} on the space $\Theta^M \times \mathcal{U}^M$ with the density given by
\begin{align}
	\bar{\pi}^M(\vartheta)
	=
	\prod_{i=1}^M
	\bar{\pi}(\bar{\theta}_i, \bar{u}_i)
	=
	\prod_{i=1}^M
	\widehat{\pi}(\bar{\theta}_i | \bar{u}_i)
	m_{\bar{\theta}_i}(\bar{u}_i),
	\label{eq:extendedtarget:memory}
\end{align}
with $\vartheta = \{\bar{\theta}, \bar{u}\}$, $\bar{\theta} = (\bar{\theta}_1, \ldots, \bar{\theta}_M)$ and $\bar{u} = (\bar{u}_1, \ldots, \bar{u}_M)$.

The product target \eqref{eq:extendedtarget:memory} consists of $M$ copies of the original target \eqref{eq:extendedtarget} referred to as sub-targets.
Hence, each component of $\bar{\theta}$ corresponds to one complete set of parameters for \eqref{eq:extendedtarget}.
This is the reason for the introduction of the bar symbol to distinguish between the $i$th component of the vector $\theta$ in \eqref{eq:extendedtarget} and the $i$th component of $\bar{\theta}$, which is in itself a parameter vector.
Finally, note that one sample from $\bar{\pi}^M(\vartheta)$ therefore consists of $M$ realisations of possible parameter vectors for \eqref{eq:extendedtarget}.

As usual, we can recover the target distribution in terms of $\bar{\theta}$ by marginalisation if the estimator $\widehat{\pi}(\bar{\theta}_i | \bar{u}_i)$ is unbiased.
That is,
\begin{align*}
	\pi(\bar{\theta})
	=
	\dint
	\bar{\pi}^M(\vartheta)
	\dd \bar{u}
	=
	\prod_{i=1}^M
	\dint
	\widehat{\pi}(\bar{\theta}_i | \bar{u}_i)
	m_{\bar{\theta}_i}(\bar{u}_i)
	\dd \bar{u}_i
	=
	\prod_{i=1}^M
	\pi(\bar{\theta}_i)
	,
\end{align*}
where each sub-target is assumed to be independent and by using the unbiasedness of $\widehat{\pi}(\bar{\theta}_i | \bar{u}_i)$ in \eqref{eq:pmmh:unbiasedness}.

We can now sample from \eqref{eq:extendedtarget:memory} by using a Metropolis-within-Gibbs \citep{RobertCasella2004, RobertsRosenthal2006} scheme as discussed in \cite{ZhangSutton2011}.
Hence, a sample from the $i$th component of $\vartheta$ is generated at iteration $k$, $\vartheta_{k,i} = \{\bar{\theta}_{k,i}, \bar{u}_{k,i}\}$, while keeping the other components fixed.
This amounts to sampling $\vartheta'_i$ given $\vartheta_i$ from a Markov kernel denoted $R$ by
\begin{align}
	\vartheta'_i
	\sim
	R
	\left(
		\vartheta'_i
		|
		\vartheta_i,
		\vartheta_{\setminus i}
	\right),
	\qquad
	i \in \{1, \ldots, M\},
	\label{eq:mh:within:gibbs:update}
\end{align}
where the dependence of $k$ is suppressed for brevity.
The Markov kernel corresponds to one iteration of a pmMH algorithm and the exact form of the Markov kernel is given later in Section~\ref{sec:theory}.
To simplify the notation, we introduce the \textit{memory of the Markov chain},
\begin{align}
	\vartheta_{\setminus i}
	=
	\{
		\vartheta'_{1:i-1},
		\vartheta_{i+1:M}
	\},
	\label{eq:pmmh:memory:def}
\end{align}
which is the information available within the memory for component $i$ during the iteration.
The dependence on the iteration number $k$ is suppressed for brevity.
The memory $\vartheta_{\setminus i}$ consists of the $i-1$ elements which already have been updated during the current iteration and the $M-i-2$ remaining elements.

Repeated sampling from \eqref{eq:mh:within:gibbs:update} will ultimately produce $K$ samples from the extended product target $\bar{\pi}^M(\vartheta)$ denoted by $\{\vartheta_k\}_{k=1}^K$.
The proposal for each component $i$ during the iteration $k$ can make use of the information in $\vartheta_{k, \setminus i}$.
Hence, we can write the proposal as
\begin{align}
	q(
		\vartheta_i'
		|
		\vartheta_{i},
		\vartheta_{\setminus i}
	)
	=
	q
	\left(
		\bar{\theta}'_i
		|
		\bar{u}'_i,
		\bar{\theta}_{i},
		\vartheta_{\setminus i}
	\right)
	q(
		\bar{u}'_i
		|
		\bar{u}_i
	).
	\label{eq:theory:proposal}
\end{align}
The resulting acceptance probability is given by
\begin{align}
	\alpha(\vartheta'_i, \vartheta_i)
	=
	1
	\wedge
	\frac
	{\bar{\pi}^M(\vartheta_i', \vartheta_{\setminus i})}
	{\bar{\pi}^M(\vartheta_i, \vartheta_{\setminus i})}
	\frac
	{q(\vartheta_i | \vartheta_i', \vartheta_{\setminus i})}
	{q(\vartheta_i' | \vartheta_i, \vartheta_{\setminus i})}
	=
	1
	\wedge
	\frac
	{\bar{\pi}(\vartheta_i')}
	{\bar{\pi}(\vartheta_i)}
	\frac
	{q(\vartheta_i | \vartheta_i', \vartheta_{\setminus i})}
	{q(\vartheta_i' | \vartheta_i, \vartheta_{\setminus i})},
	\label{eq:theory:aprob}
\end{align}
which follows from the sub-targets related to $\vartheta_{\setminus i}$ cancelling as they are fixed.
The complete procedure is given in Algorithm~\ref{alg:pmmh:memory}.

A systematic scan is used in this paper where each Markov kernel is applied in sequence to update the chain.
It is however possible to also make use of random scan, which applies the kernels in a random sequence.
This might further increase the performance but is left as future work.

\begin{algorithm}[!t]
    \caption{\textsf{Pseduo-marginal MH with memory}}
    \footnotesize
	\textsc{Inputs:} $K>0$, $M>1$, $\bar{\theta}^{(0)}$ and $q(\vartheta_i'|\vartheta_{k-1,i},\vartheta_{k, \setminus i})$.\\
	\textsc{Output:} $\{\vartheta^{(k)}\}_{k=1}^K$.
	\algrule[.4pt]
	\begin{algorithmic}[1]
		\STATE Sample $\bar{u}^{(0)}_i$ from $m(\bar{u}^{(0)}_i)$ for $i \in \{1, \ldots, M\}$.
		\STATE Compute $\widehat{\pi}(\bar{\theta}^{(0)}_i | \bar{u}^{(0)}_i)$ for $i \in \{1, \ldots, M\}$.
		\FOR{$k=1$ to $K$ and $i=1$ to $M$}
			\STATE Sample candidate parameter and auxiliary variables by
			$\vartheta'_{k,i} \sim q( \vartheta_{k, i}' | \vartheta_{k-1, i}, \vartheta_{k, \setminus i})$
			using the proposal \eqref{eq:theory:proposal}.
			\STATE Sample $\omega_i^{(k)}$ uniformly over $[0,1]$.
			\IF{$\omega_i^{(k)} \leq \alpha(\vartheta'_{k,i}, \vartheta_{k-1,i})$ given by \eqref{eq:theory:aprob}.}
				\STATE Accept $\vartheta'_{k,i}$ by $\vartheta_{k,i} \leftarrow \vartheta'_{k,i}$.
			\ELSE
				\STATE
				Reject $\vartheta'_{k,i}$ by $\vartheta_{k,i} \leftarrow \vartheta_{k-1,i}$.
			\ENDIF
		\ENDFOR
	\end{algorithmic}
	\label{alg:pmmh:memory}
\end{algorithm}

\section{Designing good proposals}
\label{sec:proposals}
The design of proposal distributions is instrumental for obtaining good performance in pmMH.
Recall, that two different proposal distributions are required for Algorithm~\ref{alg:pmmh:memory}, i.e., one for $u$ and another for $\theta$.
We discuss these in turn in the context of standard pmMH without a memory to ease the notation.
We consider the general case with memory in Section~\ref{sec:hessian:impdetails}.

\subsection{Proposal for auxiliary variables}
\label{sec:proposals:auxiliary}
The standard choice for the proposal for the auxiliary variables is to sample from the prior, i.e.,
\begin{align*}
	q(u' | \theta_{k-1}, u_{k-1})
	=
	m_{\theta_{k-1}}(u')
	=
	m(u'),
\end{align*}
which is known as the independent proposal as $m(u)$ is assumed not to be dependent on $\theta$.
The expression for $m(u)$ depends on the form of the estimator $\widehat{\pi}^N(\theta | u)$ but a common choice is a standard Gaussian distribution with mean zero and a identity matrix as its covariance.
This results in the acceptance probability \eqref{eq:pmmh:aprob} simplifying to
\begin{align}
	\alpha_k
	&=
	1
	\wedge
	\frac{\widehat{\pi}^N(\theta' | u')}{\widehat{\pi}^N(\theta_{k-1} | u_{k-1})}
	\frac{q(\theta_{k-1} | \theta', u')}{q(\theta' | \theta_{k-1}, u_{k-1})},
	\label{eq:pmmh:aprob:simplified}
\end{align}
by using the structure of the proposal in \eqref{eq:pmmh:proposal}.

Another alternative is the Crank-Nicholson (CN; \citealp{DahlinLindstenKronanderSchon2015,DeligiannidisDoucetPitt2017}) proposal which essentially corresponds to a first-order auto-regressive process given by
\begin{align}
	q(u' | \theta_{k-1}, u_{k-1})
	=
	\mathcal{N}
	\left(
		u';
		\sqrt{1 - \sigma^2_u}
		\,
		u_{k-1},
		\sigma^2_u
		I_{n_u}
	\right),
	\label{eq:pmmh:cn:proposal}
\end{align}
for some step length $\sigma_u > 0$ and with $n_u = |u|$, i.e., the number of elements in $u$. Moreover, we select standard Gaussian distribution as $m_{\theta}(u)$, which results in that the acceptance probability is again given by \eqref{eq:pmmh:aprob:simplified}.

This construction is only possible if $u$ can be transformed into any random variable required by the estimator $\widehat{\pi}^N(\theta | u)$.
Often this is not a problem as inverse CDF transformations can be used to convert the Gaussian random number into random numbers from most standard distributions.
The independent proposal for $u$ is recovered when $\sigma_u=1$.
Selecting $\sigma_u=0$ corresponds to fixed auxiliary variables, which would result in biased estimates of the target.

The correlation introduced by \eqref{eq:pmmh:cn:proposal} can have a beneficial impact on the performance of pmMH.
This is results in the reduction of the probability of the Markov chain getting stuck in a certain state for many iterations.
This can occur if $\widehat{\pi}^N(\theta' | u') \gg \pi(\theta')$ leading to the candidate parameter being accepted.
After this, it is difficult to get any candidate accepted as the estimate of the log-target at the current state is much larger than its true value.
The correlation in $u$ leads to that the estimate of the log-target for the subsequent candidates are also larger than their true values and therefore increases the probability of accepting the candidate and for the Markov chain to get loose.

\subsection{Proposal for parameters}
\label{sec:proposals:parameters}
A popular choice for the parameter proposal is a Gaussian distribution,
\begin{align}
	q(\theta' | \theta_{k-1}, u_{k-1})
	=
	\mathcal{N}
	\left(
	\theta';
	\mu(\cdot),
	\Sigma(\cdot)
	\right),
	\label{eq:pmmh:gaussianproposal}
\end{align}
where the mean $\mu$ and covariance $\Sigma$ are allowed to depend on $\theta_{k-1}$ and $u_{k-1}$.

The simplest Gaussian proposal is given by the random walk corresponding to $\mu=\theta_{k-1}$ and $\Sigma$ as some fixed covariance matrix.
However, it is known that this proposal scales unfortunately with the dimension of the state $\theta_k$.
For larger problem, it is better to make use of gradient and Hessian information when constructing the proposal.
This information can be used to introduce a mode-seeking behaviour by following the drift induced by the gradient as well as scale the proposal to the local curvature using the Hessian.

We can introduce this additional information into the proposal by considering a Langevin diffusion,
\begin{align}
	\dn \theta_t
	=
	- \frac{1}{2} \Sigma
	\nabla \log \pi(\theta) \big|_{\theta=\theta_t}
	+
	\sqrt{\Sigma}
	\dn B_t,
	\label{eq:pmmh:langevindiffusion}
\end{align}
where $B_t$ denotes a Brownian motion.
This choice is motivated by the diffusion having $\pi(\theta)$ as its stationary distribution under some mild assumptions.
Hence, samples from the target can be obtained by simply simulating the diffusion for a certain amount of time.
Moreover, we follow \cite{GirolamiCalderhead2011} and make use of the negative inverse Hessian of the log-target as $\Sigma$ to facilitate efficient sampling.

The first-order Euler discretisation of the Langevin diffusion \eqref{eq:pmmh:langevindiffusion} is given by the Gaussian proposal \eqref{eq:pmmh:gaussianproposal} with the statistics,
\begin{align}
	\mu(\theta_{k-1})
	=
	\theta_{k-1}
	+
	\frac{\epsilon^2}{2}
	H(\theta_{k-1})
	G(\theta_{k-1}),
	\qquad
	\Sigma(\theta_{k-1})
	=
	\epsilon^2
	H(\theta_{k-1}),
	\label{eq:pmmh:secondorder:proposal}
\end{align}
where $\epsilon > 0$ denotes a step size and introducing the notation
\begin{align*}
	G(\theta') = \nabla_{\theta} \log \pi(\theta) \big|_{\theta=\theta'}, \smallskip
	H^{-1}(\theta') = - \nabla^2_{\theta} \log \pi(\theta) \big|_{\theta=\theta'},
\end{align*}
for the gradient and negative inverse Hessian of the log-target, respectively.

Note that it is not possible to compute $G(\theta')$ and $H(\theta')$ in many situations when $\pi(\theta)$ cannot be evaluated.
However, these quantities can be estimated using e.g., SMC which make use of the auxiliary variables $u$ entering the proposal.
For example in SSMs, estimating $G(\theta)$ using $u$ by employing particle smoothing problem and the Fisher identity \citep[p.\ 352]{CappeMoulinesRyden2005}.

The Hessian can be approximated in a similar manner by the Louis identity \citep[p.\ 352]{CappeMoulinesRyden2005}.
The problem is that the accuracy of the Hessian estimates often is worse than for the gradient estimates.
Empirically, the estimates of the smallest diagonal and off-diagonal elements of the Hessian are often very noisy even with many particles in the SMC algorithm.
This introduce a computational bottleneck as each iteration of pmMH becomes prohibitively expensive.

\section{Quasi-Newton proposals}
\label{sec:hessian}
We propose to instead compute a local approximation of the Hessian using gradient information by leveraging qN methods.
This is useful as the gradient estimates are usually more accurate than the Hessian estimates and applying qN only amounts to a negligible computational overhead.
The use of qN can therefore serve as an alternative to the use of the Louis identity for SSMs as well as direct computation of the Hessian for some models when large amounts of data are available.

Hessian estimates obtained by qN methods \citep{NocedalWright2006} have a long history in the optimisation literature for finding the maximum or minimum of non-linear functions.
However, a major difference in this paper compared with most of the optimisation literature is that only noisy estimates of the gradients are available.
Most qN methods have not been developed for this situation, which can result in numerical instability in the Hessian estimates.
Such instability could result in that a candidate parameter far away from the posterior mode is accepted.

In this section, we address these issues by introducing a novel so-called trust-region method to encode the region of the target space in which the Hessian approximation is valid.
Furthermore, we investigate a quite recent alternative based on regularised LS, which can deal with the noisy gradient estimate in a natural manner.
Finally, we make use of limited-memory implementations leveraging the memory introduced in Section~\ref{sec:pmmh:memory}.

\subsection{SR1 with a trust-region}
\label{sec:hessian:sr1}
SR1 \citep[Chapter~6.2]{NocedalWright2006} is a common qN method which has been successfully applied to many different optimisation methods.
Compared with other qN methods, SR1 can give substantially better estimates of the Hessian and can enjoy more rapid convergence \citep{ConnGouldToint1991}.
Hence, it is a natural candidate for building proposals with pmMH but has previously not been considered in this context in the statistics or machine learning literature.

The SR1 algorithm is implemented as an iterative procedure which updates the current estimate of the inverse Hessian $H_l$ by
\begin{align}
	H_{l+1}
	&=
	H_l
	+
	\frac
	{(s_l - H_l g_l)(s_l - H_l g_l)^{\top}}
	{(s_l - H_l g_l)^{\top} g_l},
	\label{eq:sr1:update}
	\\
	s_l
	&\triangleq
	\theta_l - \theta_{l-1},
	\quad
	g_l
	\triangleq
	G(\theta_l)
	-
	G(\theta_{l-1}).
	\nonumber
\end{align}
where $s_l$ and $g_l$ denote the differences in the parameters and gradients between two iterations, respectively.
The name SR1 comes from that the update is a so-called rank-one update as the estimate is updated by a term which is an outer product of two vectors.

A limited-memory version of SR1 is obtained by iterating \eqref{eq:sr1:update} over the $M-1$ values of $s_l$ and $g_l$ corresponding to the memory \eqref{eq:pmmh:memory:def}.
The initial value of the Hessian estimate $H_0$ is computed as a scaled version of the identity matrix $\epsilon_0 \|G(\theta')\|_2^{-1} \mathbf{I}_{n_{\theta}}$, where $\epsilon_0$>0 is selected by the user and $G(\theta')$ denotes the gradient in the candidate parameter.

As discussed above, an SR1 method can become numerically unstable due to e.g., the noise entering the gradient estimates, rounding errors and loss of rank.
Therefore, an SR1 method is usually accompanied by a trust-region to encode the part of the target space in which the Laplace approximation implied by the proposal \eqref{eq:pmmh:secondorder:proposal} is believed to be accurate.

To incorporate this into pmMH, we propose to augment the proposal to be a product of the original proposal and a trust-region distribution,
\begin{align}
	\bar{q}(\theta' | \theta_{k-1})
	=
	q(\theta' | \theta_{k-1})
	\, \,
	q_{\text{trust}}(\theta' | \theta_{k-1}),
	\label{eq:qn:sr1trustregion}
\end{align}
where $q(\theta' | \theta_{k-1})$ is given by \eqref{eq:pmmh:secondorder:proposal}.
A natural choice for the trust-region distribution is given by
\begin{align*}
	q_{\text{trust}}(\theta' | \theta_{k-1})
	=
	\mathcal{N}(\theta'; \theta_{k-1}, \Lambda_k),
\end{align*}
where $\Lambda_k$ is some covariance matrix at iteration $k$.
In this paper, we make use of a similar approach to the hybrid method \citep{DahlinLindstenSchon2015a} and set $\Lambda_k$ to the sample covariance matrix of the Markov chain computed using a part of the burn-in iterations.

The choice of the trust-region distribution to be a Gaussian is natural since the product of two Gaussians is itself a Gaussian distribution.
Hence for this choice, we can rewrite \eqref{eq:qn:sr1trustregion} as a new Gaussian distribution with updated statistics, which makes it easy to sample from the proposal and to evaluate it point-wise.

Furthermore, this mechanism mimics the behaviour of a trust-region method in standard qN methods, which adapts its size depending on the length of the previous steps.
A simple analysis of \eqref{eq:qn:sr1trustregion} reveals that $\Lambda_k$ will determine the maximum scale of the region in which the local model is trusted.

\subsection{Least-squares}
\label{sec:hessian:ls}
Another method for estimating the Hessian is to make use of LS as proposed by \cite{WillsSchon2018} and \cite{HaeltermanDegrooteVanHueleVierendeels2009}
This can be accomplished by leveraging the so-called \textit{secant condition},
\begin{align}
	G(\theta_l)
	-
	G(\theta_{l-1})
	=
	H^{-1}
	(\theta_l - \theta_{l-1}),
\end{align}
for some pair of parameters $\theta_l$ and $\theta_{l-1}$.
Thus if $M-1$ such pairs are collected, the inverse Hessian of the log-target can be estimated by LS using
\begin{align}
	H_k
	Y_k
	&=
	S_k,
	\label{eq:ls:update}
	\\
	Y_k
	&\triangleq
	\Delta G_{k-1:k-M-1},
	\nonumber
	\\
	S_k
	&\triangleq
	\Delta \theta_{k-1:k-M-1},
	\nonumber
	\\
	\Delta v_{i:j}
	&\triangleq
	(v_i - v_{i+1}, \ldots, v_{j-1} - v_j).
	\nonumber
\end{align}
This LS method has some interesting advantages over traditional qN methods such as: (i) it handles the noise in the gradient estimates in a natural manner, (ii) it requires no initialisation of the Hessian estimate and (iii) it is straightforward to implement using existing commands in most computing environments.

Some care needs to be taken when applying \eqref{eq:ls:update} with estimates of the gradients computed using $u$.
That is, the noise in the estimates will influence future values of $\theta$ and therefore introduce problems with multi-colinearity.
This can be mitigated by using two different sets of auxiliary variables $u$, one set to compute the gradient estimate entering \eqref{eq:pmmh:secondorder:proposal} and another set to compute the estimate used in \eqref{eq:ls:update}.
Note that this adds no extra computational cost as these two estimates can be obtained in parallel.
We make this idea explicit later when discussing the implementation of qN proposals within pmMH in Algorithm~\ref{alg:pmmh:qn}.

Finally, it is possible to add regularisation of the Hessian estimator and to make use of updates using matrix factorisations to improve computational efficiency.
This is especially important when $n_{\theta}$ is large \citep{WillsSchon2018} or when the gradients are noisy to obtain stable estimates of the Hessian.
A regularised least squares estimate can be computed by
\begin{align}
	H_k
	=
	\left( \lambda I + Y_k Y_k^{\top} \right)^{-1}
	\left( \lambda \Lambda_k + Y_k S_k^{\top} \right),
	\label{eq:ls:update:reg}
\end{align}
where $\lambda > 0$ denotes the coefficient determining the strength of the regularisation.
Here, $\Lambda_k$ denotes the regularisation matrix at iteration $k$ which is computed in the same manner as the trust-region covariance for SR1.

In preliminary studies, we have seen that the use of regularisation improves the performance significantly by improving the numerical stability of the Hessian estimates.
Moreover, selecting the regularisation parameter $\lambda$ as $0.1$ results in good performance for many models.

\subsection{Implementation}
\label{sec:hessian:impdetails}
In this section, we describe how to implement qN as a proposal using pmMH with memory as described in Algorithm~\ref{alg:pmmh:memory}.
We make use of a sliding window of previous states of the Markov chain denoted by $\psi_{k,M} \triangleq \{\theta_i, G(\theta_i)\}_{i=k-M}^k$, which is the information about the parameters and gradients from the $M-1$ previous steps.
This is equivalent to sequentially applying the Markov kernel \eqref{eq:mh:within:gibbs:update} but is a more convenient formulation for implementation.

The qN proposal can then be expressed as
\begin{align}
	q \big( \theta' | \psi_{k,M} \big)
	&=
	\mathcal{N}
	\Big(
	\theta';
	\mu_{\text{qN}} \big( \psi_{k,M} \big),
	\epsilon^2 \Sigma_{\text{qN}} \big( \psi_{k, M} \big)
	\Big),
	\label{eq:pmmh:qn:proposal}
	\\
	\mu_{\text{qN}}(\psi_{k,M})
	&=
	\theta_{k-M}
	+
	\frac{\epsilon^2}{2}
	\Sigma_{\text{qN}}(\psi_{k,M})
	G(\theta_{k-M}),
	\nonumber
\end{align}%
\noindent where $\Sigma_{\text{qN}}(\psi_{k,M})$ denotes the qN approximation of the negative inverse Hessian of the log-target.
The mean of the proposal is centered at the parameter $M-1$ steps back, which is motivated by good performance in preliminary studies.
However, we are free to select any other parameter in the memory using a fixed or random scheme.
In~\cite{ZhangSutton2011}, the authors make use of the previous parameter, i.e., $\theta_{k-1}$.

\begin{algorithm}[!t]
    \caption{\textsf{Quasi-Newton (qN) proposal}}
    \footnotesize
	\textsc{Inputs:} $\psi_{k,M} \triangleq \{\theta_i, G(\theta_i)\}_{i=k-M}^k$ and $\delta > 0$.
	\\
	\textsc{Output:} $\theta'$.
	\algrule[.4pt]
	\begin{algorithmic}[1]
		\STATE Extract the unique elements from $\psi_{k,M}$ and sort them in ascending order (with respect to the log-target) to obtain $\bar{\psi}_{k,M}$.
		\STATE Set $\bar{M} = \textsf{size}(\bar{\psi}_{k,M}).$
		\IF{ $\bar{M} \geq 2$ }
			\STATE Initialise the Hessian estimate $H_0$.
			\FOR{$l=1$ to $\bar{M}$}
				\STATE Calculate $s_l$ and $u_l$ based on the $l$th pair in $\bar{\psi}_{k,M}$.
				\STATE [\textsf{SR1 update}] Compute \eqref{eq:sr1:update} to obtain $H_l$.
			\ENDFOR
			\STATE [\textsf{LS update}] Compute $H_{\bar{M}}$ by \eqref{eq:ls:update:reg}.
			\STATE Set $\Sigma_{\text{qN}}(\psi_{k,M}) = -H_{\bar{M}}(\theta')$.
			\ELSE
				\STATE Set $\Sigma_{\text{qN}}(\psi_{k,M}) = \delta \mathbf{I}_{n_{\theta}}$.
			\ENDIF
		\STATE Correct $\Sigma_{\text{qN}}(\psi_{k,M})$ to be PD using \eqref{eq:qn:spectralfix} if required.
		\STATE Sample from \eqref{eq:pmmh:qn:proposal} to obtain $\theta'$.
	\end{algorithmic}
	\label{alg:qnproposal}
\end{algorithm}

The procedure to sample from \eqref{eq:pmmh:qn:proposal} is given in Algorithm~\ref{alg:qnproposal}.
The pmMH method based on using a window of previous states is given in Algorithm~\ref{alg:pmmh:qn}.

The Hessian estimate $\Sigma_{\text{qN}}(\psi_{k,M})$ obtained using SR1 or LS can be negative semi-definite and therefore needs to be corrected.
In this paper, this is done by shifting the negative eigenvalues to be positive by a spectral method \citep[Chapter~3.4]{NocedalWright2006}.
This correction is computed by
\begin{align}
	\Sigma_{\text{qN}}(\psi_{k,M})
	=
	Q
	\tilde{\Lambda}
	Q^{-1},
	\label{eq:qn:spectralfix}
\end{align}
where $Q$ and $\tilde{\Lambda}$ denotes the matrix of eigenvectors and the diagonal matrix of the corrected eigenvalues of the Hessian $\Sigma_{\text{qN}}(\psi_{k,M})$, respectively.
The latter is computed as $\tilde{\lambda}_i = \max(\lambda_{\min}, |\lambda_i|)$, where $\lambda_i$ denotes the $i$th eigenvalue of $\Sigma_{\text{qN}}(\psi_{k,M})$.
Here, we introduce $\lambda_{\min}$ as some minimum size of the eigenvalues to ensure a non-singular matrix.
There are plenty of other methods for correcting the Hessian, e.g., the hybrid method \citep{DahlinLindstenSchon2015a} or modified Cholesky factorisations \citep[Chapter~3.4]{NocedalWright2006}.

\begin{algorithm}[!t]
    \caption{\textsf{Correlated pmMH using qN proposals}}
    \footnotesize
	\textsc{Inputs:} $K>0$, $\theta_0$, $q(\theta', u' | \theta_{k-1}, u_{k-1})$, $M>1$ together with Algorithm~\ref{alg:qnproposal} and its inputs.\\
	\textsc{Output:} $\{\theta_k\}_{k=1}^K$.
	\algrule[.4pt]
	\begin{algorithmic}[1]
		\STATE Sample $u_0 \sim m(u_0)$ and $\tilde{u}_0 \sim m(u_0)$,
		\STATE Compute $\widehat{\pi}(\theta_0 | u_0)$, $\widehat{G}_0 = \widehat{G}(\theta_0 | u_0)$ and $\tilde{\widehat{G}}_0 = \tilde{\widehat{G}}(\theta_0 | \tilde{u}_0)$.
		\FOR{$k=1$ to $K$}
			\STATE Sample the candidate auxiliary variables $u'$ and $\tilde{u}'$ given $u_{k-1}$ and $\tilde{u}_{k-1}$ by \eqref{eq:pmmh:cn:proposal}.
			\IF{$k < M$}
				\STATE Sample $\theta'$ \eqref{eq:pmmh:gaussianproposal} with $\mu(\theta_{k-1}) = \theta_{k-1}$ and $\Sigma(\theta_{k-1}) = \delta \mathbf{I}_{n_{\theta}}$.
			\ELSE
				\STATE Sample $\theta'$ using Algorithm~\ref{alg:qnproposal} and the memory given by $\psi_{k,M} \triangleq \{\theta_i, \tilde{\widehat{G}}_i\}_{i=k-M}^k$.
			\ENDIF
			\STATE Compute $\widehat{\pi}(\theta' | u')$, $\widehat{G}(\theta')$ and $\tilde{\widehat{G}}(\theta')$.
			\STATE Sample $\omega_k$ uniformly over $[0,1]$.
			\IF{$\omega_k \leq \min\{1,\alpha_k\}$ given by \eqref{eq:pmmh:aprob:simplified}.}
				\STATE Accept the candidate by assigning \\
				$\{
					\theta_k, u_k, \widehat{G}_k, \tilde{\widehat{G}}_k
				\}
					\leftarrow
					\{
					\theta', u', \widehat{G}(\theta'), \tilde{\widehat{G}}(\theta')
				\}$.
			\ELSE
				\STATE Reject the candidate by assigning \\
				$\{
					\theta_k, u_k, \widehat{G}_k, \tilde{\widehat{G}}_k
				\}
					\leftarrow
				\{
					\theta_{k-1}, u_{k-1}, \widehat{G}_{k-1}, \tilde{\widehat{G}}_{k-1}
				\}$.
			\ENDIF
		\ENDFOR
	\end{algorithmic}
	\label{alg:pmmh:qn}
\end{algorithm}

\section{Validity of the algorithm}
\label{sec:theory}
\label{sec:theory:validity}
We discussed the pmMH with memory framework in Section~\ref{sec:pmmh:memory} and gave a concrete example in Algorithm~\ref{alg:pmmh:qn} using the qN proposal.
In this section, we discuss the validity of Algorithms~\ref{alg:pmmh:memory} and \ref{alg:pmmh:qn}.

The validity depends on two different issues: (i) the conditions for when the Metropolis-within-Gibbs construction generates samples from $\bar{\pi}^M(\vartheta)$ and (ii) when the Markov kernel R generates samples from the sub-target $\bar{\pi}_i(\vartheta_i)$.
These issues are naturally connected as the requirements for (i) depends on (ii).
Hence, we start by considering the conditions for when $R$ leaves $\bar{\pi}_i(\vartheta_i)$ invariant and then address the validity of Metropolis-within-Gibbs.

The Markov kernel $R$ corresponding to pmMH targeting $\bar{\pi}_i(\vartheta_i)$ can be expressed as
\begin{align}
	R_i(\vartheta'_i, \vartheta_i)
	& =
	\alpha(\vartheta'_i, \vartheta_i)
	q(\vartheta_i' | \vartheta_i, \vartheta_{\setminus i})
	+
	\rho(\vartheta'_i, \vartheta_i)
	\delta_{\vartheta_i}(\vartheta'_i),
	\label{eq:pmmh:memory:kernel}
	\\
	\rho(\vartheta'_i, \vartheta_i)
	& =
	1
	-
	\dint
	\alpha(\xi, \vartheta_i)
	q(\xi | \vartheta_i, \vartheta_{\setminus i})
	\dd
	\xi,
	\nonumber
\end{align}
where $\alpha(\vartheta'_i, \vartheta_i)$ is given by \eqref{eq:theory:aprob} and $\rho(\vartheta'_i, \vartheta_i)$ denotes the rejection probability.
This Markov kernel generates a reversible Markov chain by construction if the detailed balance is fulfilled by the proposal and prior for $u$, i.e.\
\begin{align}
	q(\bar{u}'_i | \bar{u}_i) m_{\bar{\theta}_i}(\bar{u}_i)
	=
	q(\bar{u}_i | \bar{u}'_i) m_{\bar{\theta}'_i}(\bar{u}'_i)
	\label{eq:pmmh:memory:detailedbalance:aux}
\end{align}
which e.g., holds for the proposal in \eqref{eq:pmmh:cn:proposal}.

\begin{lemma}
	Under Assumption~1 in \cite{AndrieuRoberts2009} and given that the (ideal) MH algorithm targeting $\pi_i(\bar{\theta}_i)$ defines a $\phi$-irreducible and reversible Markov chain.
	Then, the Markov chain defined by $R_i(\vartheta'_i, \vartheta_i)$ in \eqref{eq:pmmh:memory:kernel} is also irreducible and reversible.
	Furthermore, we have in the total variational (TV) norm
	 \begin{align*}
		\|
			R^{k}_i(\cdot, \vartheta_{0, i})
			-
			\bar{\pi}_i(\vartheta_i)
		\|_{\text{TV}}
		\longrightarrow
		0,
	 \end{align*}
	 when $k \rightarrow \infty$ and for any $\vartheta_{0, i}$ such that $\bar{\pi}_i(\vartheta_{0, i}) > 0$.
	 \label{lemma:pmmh}
\end{lemma}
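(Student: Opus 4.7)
The plan is to reduce the proof to the standard pseudo-marginal MH analysis of \cite{AndrieuRoberts2009} by treating the memory $\vartheta_{\setminus i}$ as a fixed conditioning argument throughout the analysis of the kernel $R_i$. Once the memory is frozen, $R_i$ has exactly the structure of a pmMH kernel on $\Theta \times \mathcal{U}$ targeting the sub-target $\bar{\pi}_i(\vartheta_i) = \widehat{\pi}(\bar{\theta}_i | \bar{u}_i) m_{\bar{\theta}_i}(\bar{u}_i)$, with a particular choice of proposal that happens to depend on $\vartheta_{\setminus i}$ as a fixed parameter. The argument then splits into a reversibility check, an irreducibility check, and an appeal to a standard TV convergence theorem.

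First I would establish reversibility by verifying detailed balance, $\bar{\pi}_i(\vartheta_i) R_i(\vartheta'_i, \vartheta_i) = \bar{\pi}_i(\vartheta'_i) R_i(\vartheta_i, \vartheta'_i)$. The rejection component of $R_i$ is manifestly symmetric in $(\vartheta_i, \vartheta'_i)$, so the work reduces to the acceptance part. Expanding $\bar{\pi}_i(\vartheta_i) q(\vartheta'_i | \vartheta_i, \vartheta_{\setminus i}) \alpha(\vartheta'_i, \vartheta_i)$ using the proposal factorisation \eqref{eq:theory:proposal}, the detailed balance \eqref{eq:pmmh:memory:detailedbalance:aux} for the auxiliary variables, and the min structure of \eqref{eq:theory:aprob}, yields a symmetric expression in $(\vartheta_i, \vartheta'_i)$. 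Reversibility follows, and in particular $\bar{\pi}_i$ is invariant for $R_i$.

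Next I would establish $\phi$-irreducibility. The strategy is to view $R_i$ as a noisy counterpart of the ideal MH kernel on $\Theta$ targeting $\pi_i(\bar{\theta}_i)$: marginalising the pmMH proposal and acceptance over $\bar{u}'_i$ against $m_{\bar{\theta}'_i}(\bar{u}'_i)$ and using the unbiasedness \eqref{eq:pmmh:unbiasedness} recovers an ideal MH move on $\bar{\theta}_i$. Under Assumption~1 of \cite{AndrieuRoberts2009}, together with the $\phi$-irreducibility of the ideal chain, any set of positive $\bar{\pi}_i$-measure is reachable with positive probability from any starting point in the support. The dependence of the proposal on $\vartheta_{\setminus i}$ does not interfere with this argument since $\vartheta_{\setminus i}$ is held fixed: it merely selects one admissible proposal from a parameterised family.

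Finally, combining reversibility (hence invariance of $\bar{\pi}_i$), $\phi$-irreducibility, and aperiodicity of the MH construction (automatic since $R_i$ has a non-zero rejection probability on a set of positive measure), the standard convergence theorem for Metropolis-type chains delivers $\|R_i^k(\cdot, \vartheta_{0,i}) - \bar{\pi}_i\|_{\textrm{TV}} \to 0$ for every $\vartheta_{0,i}$ with $\bar{\pi}_i(\vartheta_{0,i}) > 0$. The main obstacle I anticipate is the irreducibility step: although the reduction to the ideal MH chain is clear, one must check carefully that the memory-dependent proposal does not shrink the support of accessible moves in a way that destroys $\phi$-irreducibility. This is a mild condition on the user's proposal design --- typically satisfied whenever the proposal retains a Gaussian component with full support on $\Theta$, as is the case for the qN proposals of Section~\ref{sec:hessian} after the trust-region and regularisation corrections.
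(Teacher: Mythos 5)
Your proposal is correct and follows essentially the same route as the paper, which simply invokes Theorem~1 of \cite{AndrieuRoberts2009} together with the detailed balance condition \eqref{eq:pmmh:memory:detailedbalance:aux} for the auxiliary-variable proposal; you have unpacked exactly those two ingredients (freezing the memory to recover a standard pmMH kernel, checking reversibility via the min structure plus \eqref{eq:pmmh:memory:detailedbalance:aux}, and inheriting irreducibility from the ideal marginal chain). The only difference is level of detail, not substance.
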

\begin{proof}
	Follows from Theorem~1 in \cite{AndrieuRoberts2009} together with the additional detailed balance condition for the proposal of $u$ in \eqref{eq:pmmh:memory:detailedbalance:aux}.
\end{proof}

The idea is to make use of the Metropolis-within-Gibbs construction in \eqref{eq:mh:within:gibbs:update}, which either systematically updates one component after another or randomly selecting the component to update.
The conditions for when such a scheme is valid has been investigated by numerous authors but the results in \cite{RobertsRosenthal2006} are particular of interest for this paper.
Especially, Corollary~19 gives the sufficient conditions for the Markov chain targeting the extended product target and the sub-chains being Harris recurrent which together with $\phi$-irreducibility gives an ergodic chain.

\begin{lemma}
	Given that the target $\bar{\pi}^M(\vartheta)$ is integrable over any combination of $\{\Theta_1, \mathcal{U}_1, \ldots, \Theta_M, \mathcal{U}_M\}$.
	Then, the Markov kernel $\bar{R}^{k}(\vartheta', \vartheta)$ given by the combination of \eqref{eq:mh:within:gibbs:update} and \eqref{eq:pmmh:memory:kernel} defines a Harris recurrent Markov chain.
	Hence, if the Markov chain also is $\phi$-irreducible, we have
	\begin{align}
		\|
			\bar{R}^{k}(\cdot, \vartheta_0)
			-
			\bar{\pi}^M(\vartheta)
		\|_{\text{TV}}
		\longrightarrow
		0,
		\label{eq:mwg:convergence}
	 \end{align}
	 when $k \rightarrow \infty$ and for any $\vartheta_{0}$ such that $\bar{\pi}^M(\vartheta_0) > 0$.
	 Furthermore, the same holds for all the sub-chains.
	 \label{lemma:mwg}
\end{lemma}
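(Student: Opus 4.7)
The plan is to reduce Lemma~\ref{lemma:mwg} to Corollary~19 of \cite{RobertsRosenthal2006} by verifying its hypotheses for our systematic-scan Metropolis-within-Gibbs construction over the product target $\bar{\pi}^M(\vartheta)$, using Lemma~\ref{lemma:pmmh} to handle each sub-kernel. The key observation exploited throughout is that the product structure of $\bar{\pi}^M$ makes the full conditional of component $i$ equal to the sub-target $\bar{\pi}_i(\vartheta_i)$, so that $\vartheta_{\setminus i}$ enters $R_i$ only through the proposal but not through the stationary distribution being targeted.

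First I would show that each sub-kernel $R_i(\vartheta'_i, \vartheta_i)$ in \eqref{eq:pmmh:memory:kernel}, viewed as a kernel on $\Theta_i \times \mathcal{U}_i$ with $\vartheta_{\setminus i}$ held fixed, leaves $\bar{\pi}_i$ invariant. This is precisely the statement of Lemma~\ref{lemma:pmmh}, noting that the acceptance ratio \eqref{eq:theory:aprob} simplifies so that the $\vartheta_{\setminus i}$-factors cancel and the detailed balance condition \eqref{eq:pmmh:memory:detailedbalance:aux} provides the reversibility with respect to $m_{\bar{\theta}_i}(\bar{u}_i)$ required for the pmMH argument. Lifting this to the product space, $R_i$ also leaves $\bar{\pi}^M$ invariant, because the remaining factors $\prod_{j\neq i}\bar{\pi}(\vartheta_j)$ are unaffected by the update. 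Hence the composed systematic-scan kernel
\begin{align*}
\bar{R}(\vartheta', \vartheta) = \big(R_1 \circ R_2 \circ \cdots \circ R_M\big)(\vartheta', \vartheta)
\end{align*}
has $\bar{\pi}^M$ as an invariant distribution.

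Next I would verify the integrability / measurability assumptions needed for Corollary~19 of \cite{RobertsRosenthal2006}: the target is integrable over every subset of coordinates (this is the hypothesis of the lemma, and it guarantees that the conditional distributions used implicitly by the Gibbs scan are well-defined probability measures), and each sub-kernel $R_i$ is a valid Markov transition kernel by Lemma~\ref{lemma:pmmh}. Given $\phi$-irreducibility of $\bar{R}$ (an assumed input), the corollary then yields Harris recurrence of the full chain generated by $\bar{R}$, from which the total-variation convergence in \eqref{eq:mwg:convergence} follows by the standard ergodic theorem for Harris recurrent, $\phi$-irreducible, aperiodic Markov chains with invariant $\bar{\pi}^M$. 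Aperiodicity is inherited from the rejection probability $\rho(\vartheta'_i, \vartheta_i) > 0$ in \eqref{eq:pmmh:memory:kernel}, which gives each $R_i$ (and hence $\bar{R}$) a non-trivial holding probability.

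Finally, the claim for the sub-chains follows by marginalisation: since $\bar{\pi}^M$ factorises as $\prod_i \bar{\pi}(\vartheta_i)$, the marginal in coordinate $i$ of $\bar{R}^{k}(\cdot, \vartheta_0)$ converges in TV to $\bar{\pi}_i$, and the trajectory of the $i$th component under the scan is itself a Harris recurrent Markov chain by Corollary~19. The main obstacle I expect is formally justifying that the dependence of the proposal $q(\vartheta'_i \mid \vartheta_i, \vartheta_{\setminus i})$ on the memory $\vartheta_{\setminus i}$ does not break the Metropolis-within-Gibbs invariance argument; this is resolved by treating the memory as a deterministic function of the current state of the full chain at the moment $R_i$ is applied, so each $R_i$ is a bona fide Markov kernel on the full product space whose stationarity with respect to $\bar{\pi}^M$ reduces, via the product structure, to the stationarity of pmMH with respect to $\bar{\pi}_i$ established in Lemma~\ref{lemma:pmmh}.
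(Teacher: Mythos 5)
Your proposal is correct and follows essentially the same route as the paper: the paper's proof is a one-line appeal to the Harris-recurrence results for Metropolis-within-Gibbs in Roberts and Rosenthal (2006) (Theorem~6 and Corollary~18 in the proof, with Corollary~19 invoked in the surrounding discussion), resting implicitly on the sub-kernel invariance established in Lemma~\ref{lemma:pmmh}. You simply spell out the verification of the hypotheses (invariance of each $R_i$ with respect to $\bar{\pi}^M$ via the product structure, integrability, aperiodicity, and the marginalisation argument for the sub-chains) that the paper delegates entirely to the cited reference.
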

\begin{proof}
	Follows directly from Theorem~6 and Corollary~18 in \cite{RobertsRosenthal2006}.
\end{proof}

To summarise, we combine Lemmas~\ref{lemma:pmmh} and \ref{lemma:mwg} to show that the Markov chains generated using Algorithms~\ref{alg:pmmh:memory} and \ref{alg:pmmh:qn} converges to the sought target distribution.

\begin{proposition}
	Given the conditions of Lemmas~\ref{lemma:pmmh} and \ref{lemma:mwg}.
	The Markov chain defined by \eqref{eq:pmmh:memory:kernel} and it sub-chains are ergodic and have $\bar{\pi}^M(\vartheta)$ and $\bar{\pi}_i(\vartheta)$ as their stationary distributions, respectively.
\end{proposition}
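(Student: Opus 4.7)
The plan is to present the proposition as a direct packaging of Lemmas~\ref{lemma:pmmh} and \ref{lemma:mwg}, since almost all the analytical work has already been done. The strategy is to verify the two hypotheses required by Lemma~\ref{lemma:mwg} using the conclusions of Lemma~\ref{lemma:pmmh}, and then read off ergodicity together with the identification of the stationary distributions from the TV-convergence statement \eqref{eq:mwg:convergence}.

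First I would invoke Lemma~\ref{lemma:pmmh} to establish that for each $i \in \{1,\ldots,M\}$ the sub-kernel $R_i$ is reversible with respect to $\bar{\pi}_i(\vartheta_i)$, so in particular $\bar{\pi}_i$ is invariant under $R_i$. Because the extended product target factorises as $\bar{\pi}^M(\vartheta) = \prod_{i=1}^M \bar{\pi}_i(\vartheta_i)$, the components of $\vartheta_{\setminus i}$ are fixed during the update of $\vartheta_i$, and the sub-targets associated with $\vartheta_{\setminus i}$ cancel in the acceptance ratio \eqref{eq:theory:aprob}, the invariance of the full product under the systematic scan of the $R_i$'s follows from the component-wise invariance. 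This verifies the Metropolis-within-Gibbs prerequisite that each coordinate kernel leaves the corresponding full conditional invariant.

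Next I would verify the hypotheses of Lemma~\ref{lemma:mwg}. Integrability of $\bar{\pi}^M$ over any subcollection of $\{\Theta_1,\mathcal{U}_1,\ldots,\Theta_M,\mathcal{U}_M\}$ follows from the product structure together with the fact that each $\bar{\pi}_i$ is, by construction \eqref{eq:extendedtarget}, a joint density on $\Theta\times\mathcal{U}$ whose $\bar{u}_i$-marginal equals $\pi(\bar{\theta}_i)$ via \eqref{eq:pmmh:unbiasedness}; any partial integral is then a finite product of sub-densities and marginal target densities. The $\phi$-irreducibility of the composite kernel $\bar{R}$ is inherited from the $\phi$-irreducibility of each $R_i$ (assumed in Lemma~\ref{lemma:pmmh}) through the product structure: any set of positive $\bar{\pi}^M$-measure can be written, up to null sets, as a product of coordinate sets of positive $\bar{\pi}_i$-measure, each of which is reachable by the corresponding $R_i$ in one systematic sweep. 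Lemma~\ref{lemma:mwg} then yields Harris recurrence and the TV-convergence \eqref{eq:mwg:convergence} for $\bar{R}$, and the same statement for each sub-chain, which together give ergodicity with the claimed stationary distributions.

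The step I expect to require the most care is the $\phi$-irreducibility of the joint chain: the sub-kernels act only on one coordinate at a time, so one must check that a single systematic scan can connect any two product sets of positive measure. In practice, the independence of the sub-targets together with the $\phi$-irreducibility guaranteed by Lemma~\ref{lemma:pmmh} makes this mechanical, but it is the one place where the product structure of $\bar{\pi}^M$ is genuinely used beyond simple cancellation; everything else is bookkeeping around the two lemmas.
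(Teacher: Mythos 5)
Your proof is correct and follows essentially the same route as the paper, which offers no argument beyond the remark that the proposition follows by combining Lemmas~\ref{lemma:pmmh} and \ref{lemma:mwg}. The only place you go beyond the paper is in attempting to \emph{derive} the $\phi$-irreducibility of the composite kernel from the component-wise irreducibility of the $R_i$; the paper sidesteps this subtlety by simply taking joint $\phi$-irreducibility as an explicit hypothesis of Lemma~\ref{lemma:mwg}, so your extra step is welcome but not required for the statement as conditioned.
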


As a consequence, the samples generated using this procedure $\{\vartheta_{k,i}\}$ are distributed according to $\bar{\pi}^M(\vartheta)$.
If we have that $\bar{\pi}_i(\vartheta) = \bar{\pi}(\vartheta)$, i.e., all the sub-chains target the same distribution, then the random samples are distributed according to $\bar{\pi}(\theta, u)$ which is the original pmMH target in \eqref{eq:extendedtarget}.

These random samples can be used to estimate expectations with respect to the extended target and the parameter posterior.
Focusing on the latter, we let $\varphi: \Theta \rightarrow \mathbb{R}$ denote a well-behaved integrable \textit{test function} with respect to the parameters $\theta$.
The expectation of $\varphi$ with respect to $\pi(\theta)$ can be obtained by marginalisation of $\bar{\pi}(\theta,u)$ resulting in
\begin{align}
	\bar{\pi}[\varphi]
	\triangleq
	\mathbb{E}_{\bar{\pi}}
	[\varphi(\theta)]
	=
	\dint
	\varphi(\theta)
	\pi(\theta)
	\dn \theta
	\label{eq:pmh:testfunction}
\end{align}
which unfortunately is intractable for most interesting problems.
However, it can be approximated by
\begin{align}
	\widehat{\bar{\pi}}^{MK}[\varphi]
	=
	\sum_{i=1}^M
	\sum_{k=1}^K
	\varphi(\bar{\theta}_{i,k}),
	\label{eq:mh:testfunction:approx:ensemble}
\end{align}
which by the ergodic theorem is a consistent estimator,
\begin{align*}
	\widehat{\bar{\pi}}^{MK}[\varphi]
	\stackrel{\text{a.s.}}{\longrightarrow}
	\bar{\pi}[\varphi],
	\qquad
	K \rightarrow \infty,
\end{align*}
for any finite memory length $M$.
Note that, we could thin the chain and retain only the samples $\theta_{i,k}$ for some $i \in \{1, \ldots, M\}$.
However, we opt for retaining all samples (even if they are correlated) by the standard argument against thinning \citep{Geyer1992,MacEachernBerliner1994}.
This is also supported by comparisons made using pilot runs.


\section{Numerical illustrations}
\label{sec:results}
In this section, we investigate the performance and properties of Algorithm~\ref{alg:pmmh:qn} using SR1 and LS methods when carrying out Bayesian parameter inference in three different models.
Furthermore, these two novel proposals are compared with four alternatives; pmMH0/1/2 \citep{DahlinLindstenSchon2015a,NemethSherlockFearnhead2016} and Algorithm~\ref{alg:pmmh:qn} using damped BFGS \citep{DahlinWillsNinness2018b}.

The pmMH0/1 method corresponds to using \eqref{eq:pmmh:gaussianproposal} without/with gradient information and replacing the Hessian with a constant matrix $\Sigma$, which is usually an estimate of the posterior covariance obtained by pilot runs.
The pmMH2 method is essentially the same as pmMH but with the Hessian computed without the use of memory by direct computations, importance sampling or SMC.

We compare the mixing of the Markov chains generated using the different MCMC methods using the inefficiency factor (IF) also known as the \textit{integrated auto-correlation time}.
The IF is computed by
\begin{align}
	\mathsf{IF} = 1 + 2\sum_{k=2}^{\infty}\mbox{corr}\{\theta_1,\theta_k\}.
	\label{eq:pmh:iact}
\end{align}
where a value of one corresponds to perfect (independent) sampling from the target.
However, the IF computation cannot be carried out as only finite realisations of the Markov chains are available.

Instead, the IF is approximated by the sum of the first $250$ empirical autocorrelation function (ACF) coefficients.
The time per effective samples (TES) is also computed by multiplying the IF with the time required per pmMH iteration.
The TES value can be interpreted as the time required to obtain one uncorrelated sample from the posterior, which takes into account the varying computational cost of the benchmarked methods.

All the implementation details are summarised in \ref{app:impdetails} and the source code and supplementary material are available via GitHub \url{https://www.github.com/compops/pmmh-qn} and Docker Hub (see \url{README.md} in GitHub repository).

\begin{sidewaysfigure}[p]
	\centering
	\includegraphics[width=\textwidth]{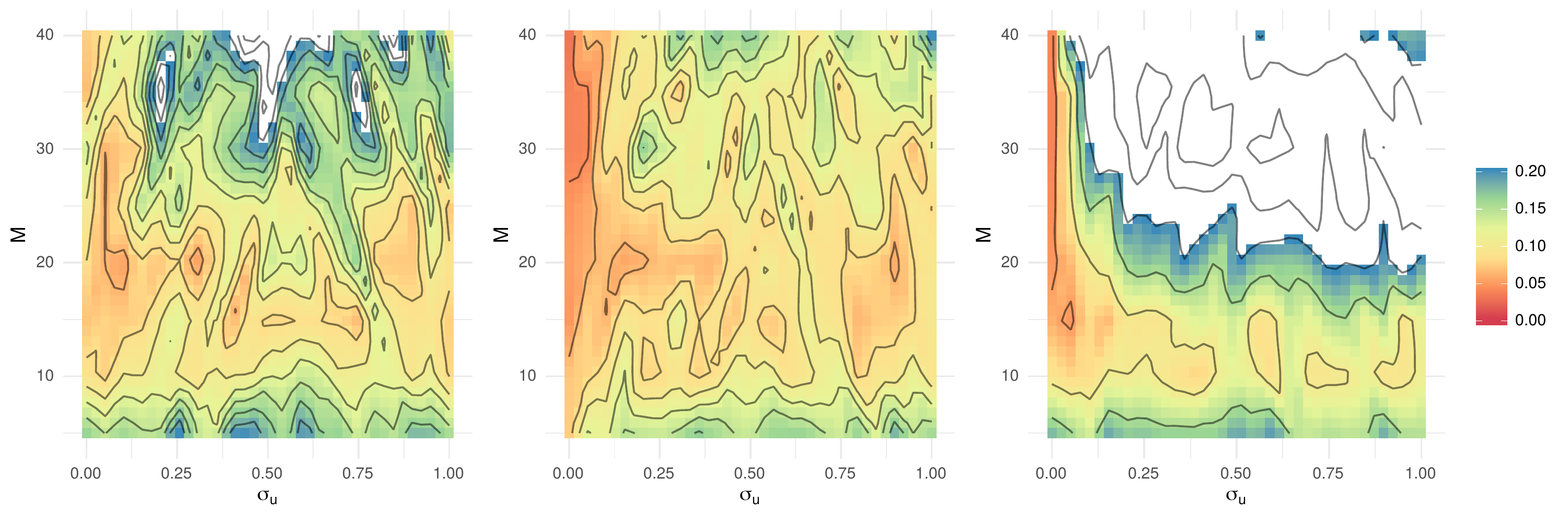}
	\caption{\footnotesize The TES (in seconds) with respect to $\sigma_u$ and $M$ for BFGS (left), LS (middle) and SR1 (right) computed using $5$ Monte Carlo runs on the random effects model with synthetic data. The white regions indicate areas with a TES larger than 0.2 seconds.}
	\label{fig:example1-qn}
\end{sidewaysfigure}

\subsection{Selecting correlation and memory length}
\label{sec:results:random}
We begin by investigating the impact of the two main user choices $M$ and $\sigma_u$ on the performance of Algorithm~\ref{alg:pmmh:qn} using BFGS, LS and SR1 methods.
The aim is to provide some guidelines for the user to tune the acceptance probability to obtain good a small TES, comparable to existing rule-of-thumbs \citep{NemethSherlockFearnhead2016}.

To this end, we consider estimating the parameter posterior \eqref{eq:parameter:posterior} for the random effects model given by
\begin{align*}
	x_t
	\sim
	\mathcal{N}(x_t | \mu, \sigma^2
	),
	\qquad
	y_t | x_t
	\sim
	\mathcal{N}(y_t | x_t, 1),
\end{align*}
with the parameters $\theta = \{ \mu, \sigma \}$ given the observations $y_{1:T} = \{y_t\}_{t=1}^T$.
The parameter posterior is intractable for this model as the likelihood depends on the latent states $x_{1:T}$.
However, it is possible to obtain point-wise estimates of the log-likelihood and its gradients using importance sampling, see \ref{app:impdetails:random} for details.

We generate a data set with $T=100$ observations using $\theta=\{1.0, 0.2\}$ and run an importance sampler with correlated samples together with Algorithm~\ref{alg:pmmh:qn} to estimate the posterior.

Figure~\ref{fig:example1-qn} summaries the median TES for the three different qN methods.
First, we note that SR1 and LS seem to provide better maximum performance compared with BFGS.
Second, LS seems to be somewhat more robust to the user choices than BFGS and SR1.
Third, the optimal value of $M$ seems to be around $20-30$ depending on the method and a small non-zero value of $\sigma_u$ is preferable.
Finally, smaller values of $M$ are preferable when $\sigma_u$ is large, which is natural as this would result in less accurate gradient estimates.

\begin{figure}[p]
	\centering
	\includegraphics[width=0.75\textwidth]{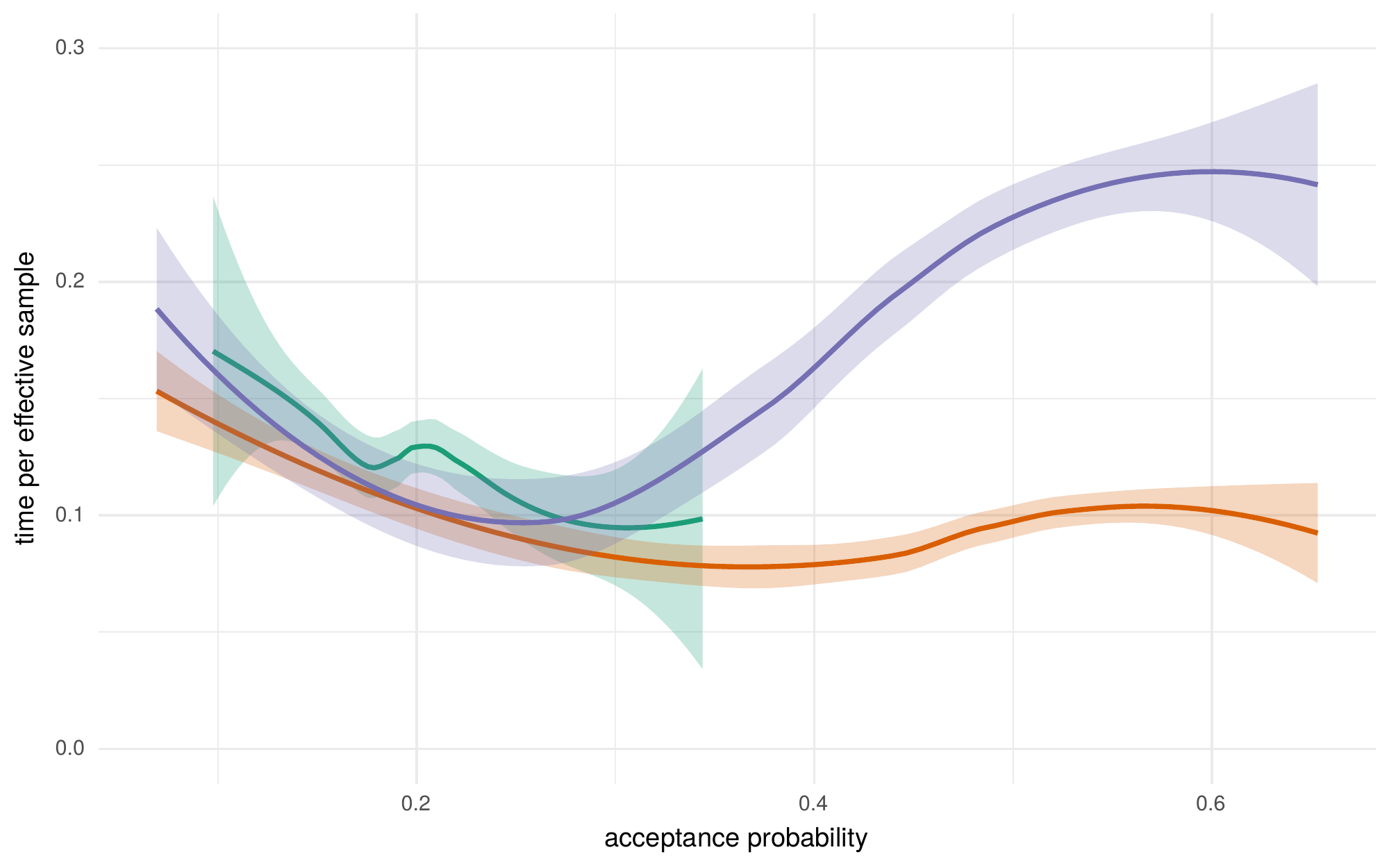}
	\caption{\footnotesize A local polynomial regression estimate of the TES as a function of the acceptance probability for BFGS (green), LS (orange) and SR1 (purple) for the random effects model with synthetic data.}
	\label{fig:example1-qn-aprob}
\end{figure}

Figure~\ref{fig:example1-qn-aprob} presents the TES as a function of the acceptance rate based on the data in Figure~\ref{fig:example1-qn}.
The three different methods behave quite differently over the interval.
For example, LS is less sensitive to the acceptance probability than the SR1 update.

Optimal acceptance probabilities are found around $0.2-0.3$ (BFGS), $0.2-0.5$ (LS) and $0.2-0.3$ (SR1) as they minimise the TES.
Overall, the LS update seems to be the preferable choice to obtain good performance with the simplest tuning.

\subsection{Logistic regression with sub-sampling}
\label{sec:results:higgs}
We continue by investigating the accuracy of the Hessian estimate obtained qN as well as benchmarking the resulting performance of pmMH in a high-dimensional setting with $22$ parameters.
To this end, we consider a problem from physics namely the detection of Higgs bosons as discussed in \cite{BaldiSadowskiWhiteson2014}.
A simple approach to construct such a classifier is to make use of a logistic regression model given by
\begin{align*}
	y_t | x_t
	\sim
	\mathcal{B}(y_t | p_t),
	\quad
	p_t
	=
	\left[
		1 +
		\exp
		\left(
			-\beta x_t^{\top}
		\right)
	\right]^{-1},
\end{align*}
where $\mathcal{B}(p)$ denotes a Bernoulli distributed random variable with success probability $p$.
Here, $x_t$ and $\beta$ denote $22$ covariates for observation $t$ and the regression coefficients (including an intercept), respectively.

Again the problem amounts to the computation of the posterior \eqref{eq:parameter:posterior} of the parameters $\beta$ given the data $\{y_t, x_t\}_{t=1}^T$.
For this model, we can compute the log-likelihood and its gradients in closed form, but
this is computationally prohibitive as the data set is large.

\begin{figure}[p]
	\centering
	\includegraphics[width=0.75\textwidth]{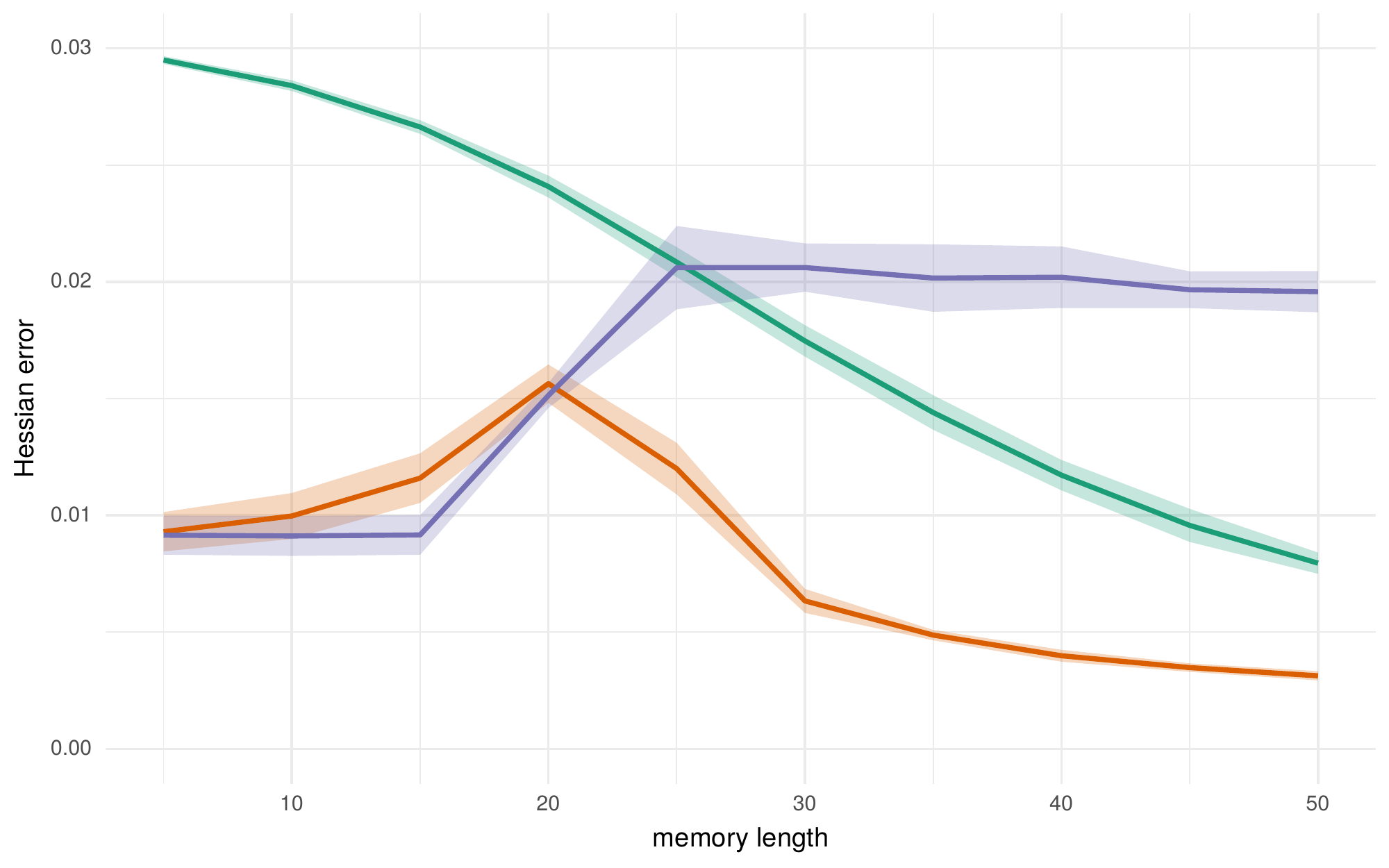}
	\caption{\footnotesize Median Frobenius norm error in the Hessian estimate from BFGS (green), LS (orange) and SR1 (purple) over $7,000$ iterations and 10 Monte Carlo runs on the logistic regression model using the Higgs data set. The shaded areas indicate the range between the $25\%$ and $75\%$ quantiles.}
	\label{fig:example2a-higgs}
\end{figure}

Instead, we make use of a sub-sampling approach, where the log-likelihood and its gradients are computed on a random subset of the data at each iteration.
This results in a proposal known as stochastic gradient descent (SGD) which previously has been applied to MH by \cite{WellingTeh2011}.
However, a naive method to sub-sampling results in a large variance in the estimates of the log-likelihood and large IF.

To mitigate this problem, we propose a possibly novel method which makes use of stratified resampling \cite{DoucetJohansen2011} together with correlated random numbers.
The Gaussian auxiliary variables $u$ are transformed into uniform variables using a CDF transform, i.e., by evaluating the Gaussian CDF in $u$.
These transformed auxiliary variables are then used within the stratified resampling algorithm to determine the subset.
More advanced methods are available in the literature and leveraged within Algorithm~\ref{alg:pmmh:qn}, see \cite{BardenetDoucetHolmes2017} and \cite{QuirozVillaniKohn2016}.

We begin by investigating the \emph{accuracy of the Hessian estimates} by comparing the estimates obtained by the BFGS, LS and SR1 with the true Hessian for $10$ independent Markov chains.
Each of the chains are run using a pmMH2 algorithm and the Hessian estimates using the three qN methods are computed at each iteration for a range of different memory lengths.
The error is computed as the mean of the Frobenius norm between the Hessian estimate and the true Hessian obtained by direct computation.

Figure~\ref{fig:example2a-higgs} presents the resulting median errors for the three different methods with varying memory lengths.
Note that the LS update attains the lowest error when $M$ increase past the number of parameters $p$, which is natural as $M > p$ is required to obtain a unique solution to the LS problem.
Moreover, the error in the Hessian estimate obtained by SR1 seems to be constant for large $M$, which could be due to the fact that this update is only valid for exact gradients.

We continue by examining the performance of complete implementations of pmMH using qN proposals for \emph{estimating the parameter posterior} of $\beta$.
Table~\ref{tbl:results:higgs} presents some performance statistics computed as the median over $10$ independent runs.
We have tried to match the acceptance probability with the recommendations above but this was difficult.

\begin{table}[t]
    \footnotesize
	\begin{center}
	\begin{tabular}{l|ccc|cc}
	\toprule
	& & & & \multicolumn{2}{c}{Time} \\
	\cmidrule(r){5-6}
	Alg. & Acc. & Cor. & mean \textsf{IF} &  Iter. & Samp. \\
	\midrule
	pmMH0  & 0.06 & -      & $372 \pm 27$  & 6 & 2.1 \\
	pmMH2  & 0.12 & 0.00   & $163 \pm 33$  & 29 & 4.7 \\
	\midrule
	pmMH-BFGS  & 0.19 & 0.00    & $33 \pm 19$           & 15 & 0.5 \\
	pmMH-LS    & 0.14 & 0.93    & $\mathbf{15} \pm 5$   & 13 & $\mathbf{0.2}$\\
	pmMH-SR1   & 0.26 & 1.00    & $\mathbf{15} \pm 4$   & 14 & $\mathbf{0.2}$ \\
	\bottomrule
	\end{tabular}
	\end{center}
	\caption{\footnotesize Performance statistics (acceptance rate, correction rate, IF and TES) as the median over $10$ Monte Carlo runs for different proposals in pmMH. The time per iteration is given in milliseconds and the TSE is given in seconds.}
	\label{tbl:results:higgs}
\end{table}

The smallest IF and TES are attained by LS and SR1, which corresponds well with results in the previous illustration.
In fact, all the IF values for the qN proposals are much smaller than for pmMH0/2.
This is the result of that the behaviour of the ACF is quite different for qN proposals compared with standard proposals due to the introduction of the memory.

\begin{figure}[p]
	\centering
	\includegraphics[width=\textwidth]{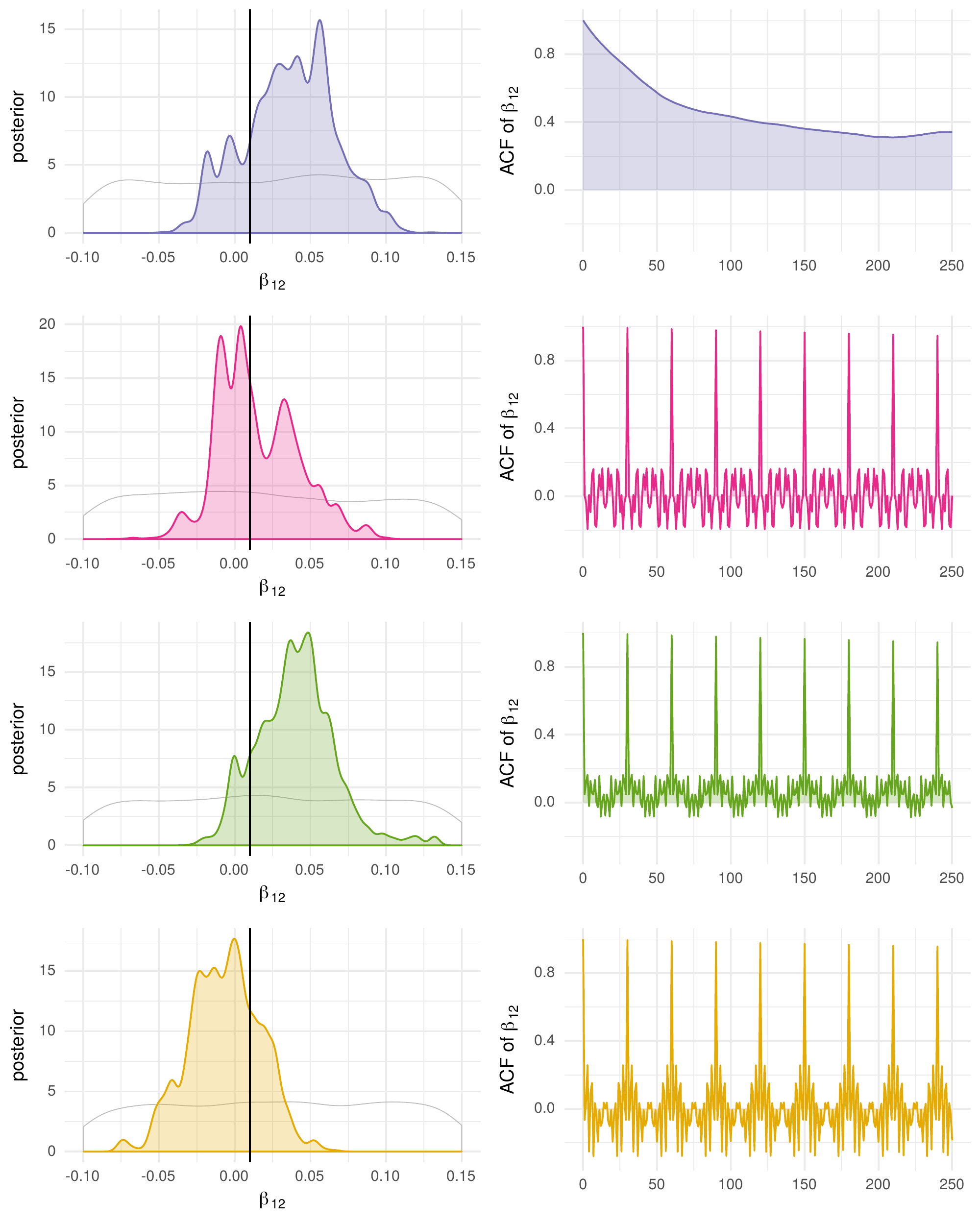}
	\caption{\footnotesize The posterior estimates (left) and empirical ACF (right) for $\beta_{12}$ obtained by pmMH2 (purple) and Algorithm~\ref{alg:pmmh:qn} using BFGS (magenta), LS (green) and SR1 (yellow) in the logistic regression model using the Higgs data set. The vertical and gray lines indicate estimates obtained by SGD and the prior distribution, respectively.}
	\label{fig:example2b-higgs}
\end{figure}

Figure~\ref{fig:example2b-higgs} presents the corresponding posterior estimates and empirical ACFs of the Markov chains for $\beta_{12}$.
The posteriors are all located around the parameter estimate obtained by SGD, which indicates that all the chains have converged.
The empirical ACFs for the qN proposals have a periodic behaviour as the proposal is centered around the state $M-1$ steps back.

An alternative to IF is to compare the variance of the posterior estimates, which is connected to the Monte Carlo error.
We compute the mean over the relative posterior variances with respect to pmMH2 and then take the median over the $10$ Monte Carlo runs to obtain: $1.37$ (BFGS), $1.08$ (LS) and $0.72$ (SR1) for the three qN proposals, respectively.

We conclude that all four proposals seem to give similar posterior variance.
However, the computational time for pmMH2 is about twice as large as for the qN proposals and also grows faster in terms of $T$.
This is the result of sums of outer products appear in the expressions for the Hessian.

Finally, recall that the Monte Carlo error typically scales as $1/\sqrt{K}$.
We therefore have a further decrease by $30\%$ of the relative variance for a fixed computational budget when taking the computational time into account.
Hence, the qN proposals are possibly a better choice in terms of performance compared with pmMH2 for this particular model.

\subsection{Stochastic volatility model for Bitcoin data}
\label{sec:results:sv}
We now investigate the performance of Algorithm~\ref{alg:pmmh:qn} on a model where it is quite easy to accurately estimate the gradient using SMC but difficult to estimate the Hessian in the same manner.
Here, the use of qN proposals could serve as a good alternative to pmMH2.

To illustrate this, we consider a stochastic volatility model with leverage to capture the behaviour of the Bitcoin log-returns presented in Figure~\ref{fig:example3-sv}.
This model can be expressed by
\begin{align*}
	x_0
	&\sim
	\mathcal{N}
	\left(
		\mu, \sigma_v^2 (1-\phi^2)^{-2}
	\right),
	\\
\begin{bmatrix}	x_{t+1} \\ y_{t} \end{bmatrix} \Bigg| x_t
	&\sim
	\mathcal{N}
	\left(
	\begin{bmatrix}	\mu + \phi( x_t - \mu ) \\ 0 \end{bmatrix},
	\begin{bmatrix}	\sigma_v^2 & \rho \\ \rho & \exp(x_t) \end{bmatrix}
	\right),
\end{align*}
where parameters of the model are $\theta=\{\mu, \phi, \sigma_v, \rho\}$.
The aim is to estimate the posteriors of $\theta$ and $x_{0:T}$ given observed log-returns $y_{1:T}$.

\begin{table}[t]
    \footnotesize
	\begin{center}
	\begin{tabular}{l|ccc|cc}
	\toprule
	 & & & & \multicolumn{2}{c}{Time} \\
	\cmidrule(r){5-6}
	 Alg. & Acc. & Cor. & max \textsf{IF} &  Iter. & Samp. \\
	\midrule
	pmMH2  & 0.64 & 1.00       & $429 \pm 22$       & 101 & 44.0 \\
	\midrule
	pmMH-BFGS  & 0.08 & 0.00    & $\mathbf{4} \pm 3$  & 18 & $\mathbf{0.1}$ \\
	pmMH-LS    & 0.17 & 0.40    & $7 \pm 2$           & 17 & $\mathbf{0.1}$ \\
	pmMH-SR1   & 0.21 & 0.72    & $6 \pm 3$           & 18 & $\mathbf{0.1}$ \\
	\bottomrule
	\end{tabular}
	\end{center}
	\caption{\footnotesize Performance statistics (acceptance rate, correction rate, IF and TES) as the median over $10$ Monte Carlo runs for different proposals in the stochastic volatility model using different proposals in pmMH. The time per iteration is given in milliseconds and the TSE is given in seconds.}
	\label{tbl:results:sv}
\end{table}

We make use of an SMC algorithm known as the fixed-lag particle smoother \citep{OlssonCappeDoucMoulines2008} to estimate the log-target and its gradient using the Fisher identity.
There are many other alternatives but this smoother is fast and reasonable accurate, see \cite{DahlinLindstenSchon2015a} and \cite{LindstenSchon2013}.
It is also possible to employ the Louis identity to compute an estimate of the Hessian as discussed in Section~\ref{sec:proposals:parameters} to use within the pmMH2 proposal and this is used as a comparison with qN proposals.

Table~\ref{tbl:results:sv} presents the median performance using the different proposals.
As before, a direct comparison between pmMH2 and the qN proposals is not possible but the three qN proposals achieve similar performance.

Figure~\ref{fig:example3-sv} presents the posterior estimates for $x_{0:T}$ and $\theta$ obtained by pmMH2 using the Louis identity and Algorithm~\ref{alg:pmmh:qn} using LS.
Here, we have matched the computational budget for the two methods and therefore the estimates obtained by pmMH2 use $2,673$ samples and the corresponding estimates from pmMH-LS makes use of $15,000$ samples.
This is the result of that the computations required by the Louis identity are slow even when implemented in \texttt{C} compared with the \texttt{Python} implementations of the computations required by the qN proposals.

We clearly see that pmMH2 struggles to explore the posterior for most parameters compared the other two proposals.
The two qN proposals based on BFGS and SR1 give essentially the same estimates and are therefore not presented in the figure.
The correlation parameter $\rho$ seems to be difficult to estimate and therefore the three posteriors differ.
Based on these results, we conclude that the qN proposals enjoy superior performance to pmMH2 and are at the same time simpler to implement than the Louis identity.

\begin{figure}[p]
	\centering
	\includegraphics[width=\textwidth]{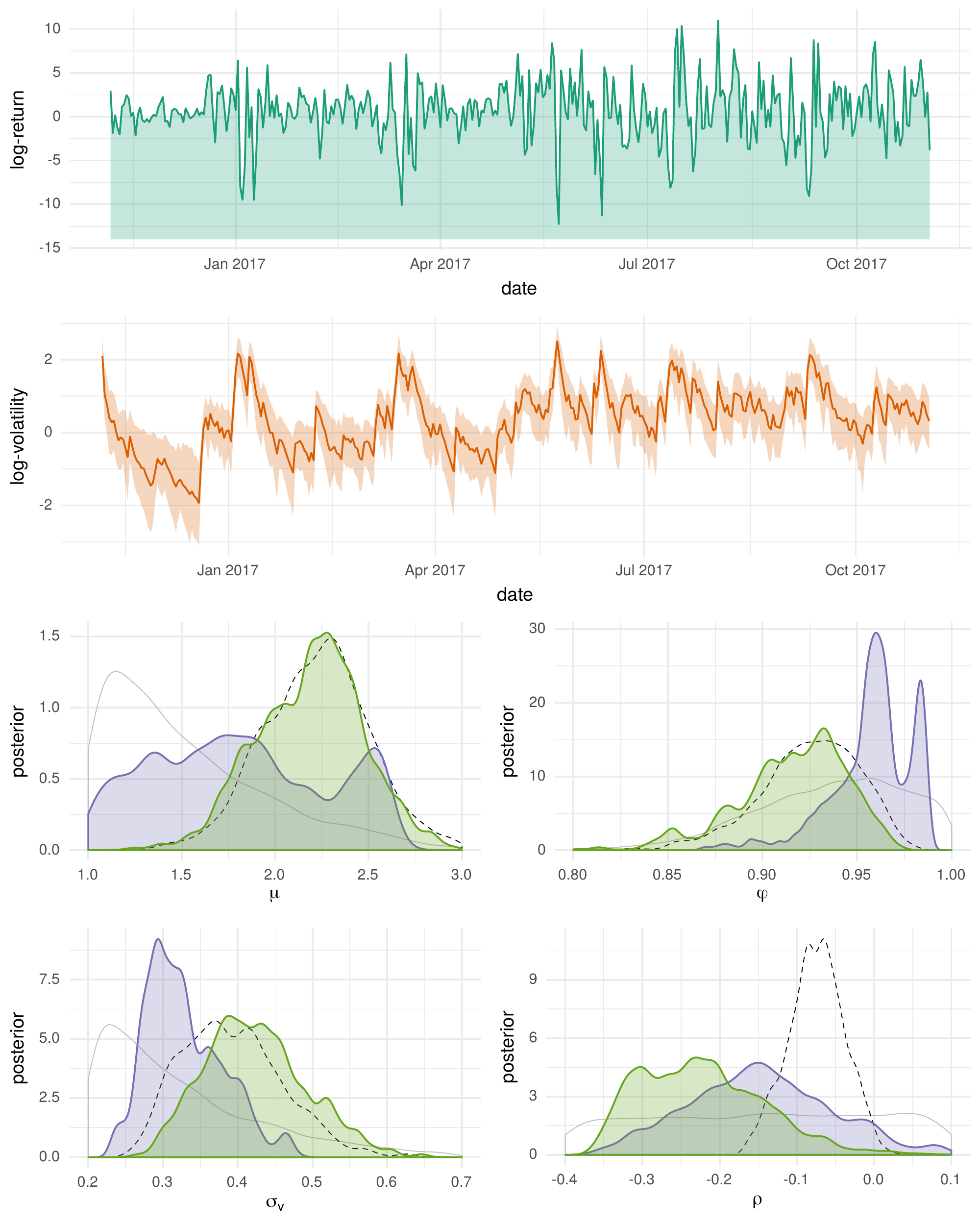}
	\caption{\footnotesize Upper: the log-returns of Bitcoin in terms of US Dollars during one year. Middle: the log-volatility estimate with $95\%$ high posterior density intervals obtained using Algorithm~\ref{alg:pmmh:qn} with LS. Lower two rows: the estimate of the posteriors for $\mu$ (Upper left), $\phi$ (upper right), $\sigma_v$ (lower left) and $\rho$ (lower right) using: pmMH0 (dashed), pmMH2 (purple) and Algorithm~\ref{alg:pmmh:qn} LS (green). The gray lines indicate the prior distributions.}
	\label{fig:example3-sv}
\end{figure}

\section{Conclusions}
\label{sec:conc}
The pmMH framework is a general methodology for exploring target distributions using Markov chains.
In this paper, we propose and prove the validity of an extension to pmMH allowing for using information from multiple previous iterations of the Markov chain when constructing the proposal.
We make use of this together with qN methods known as SR1 and regularised LS to obtain efficient proposals within pmMH.

The numerical illustrations indicate that SR1 and LS can obtain accurate estimates of the Hessian with a small computational overhead.
Hence, they can outperform direct computational of the Hessian in data-rich scenarios and when using SMC methods for estimating the Hessian.
The LS method seems to be the best choice with good overall performance and being quite robust to user choices.
Additionally, it enjoys well-understood statistical properties when the gradients are noisy, which current is not the case of BFGS and SR1.

These are interesting results as Hessian-based proposals can efficiently sample from high-dimensional targets \citep{GirolamiCalderhead2011}.
Hence, pmMH with qN proposal can be an alternative when pseudo-marginal HMC cannot be used \citep{LindstenDoucet2016}, which is the case for the models in Sections~\ref{sec:results:higgs}-\ref{sec:results:sv} or when the Hessian is difficult or time-consuming to estimate or compute directly.

Future work includes exploration of other types of proposals which makes use of the memory introduced in pmMH, see e.g., \cite{CaiMeyerPerron2008} and \cite{GilksRobertsGeorge1994}.
It is also possible to extend pseudo-marginal \citep{LindstenDoucet2016} with memory, which already has been proposed for HMC \citep{ZhangSutton2011,SimsekliBadeauCemgilRichard2016}.
Finally, more detailed analysis of the statistical properties of qN proposals is required to derive rule-of-thumbs to simplify calibration similar to \cite{DoucetPittKohn2015} .

\section*{Acknowledgements}
This work was supported by the Australian Research Council Discovery Project DP140104350. The authors would like to thank Prof.\ Thomas Sch\"{o}n and Dr.\ Fredrik Lindsten for valuable comments on an earlier draft.

\bibliographystyle{plainnat}
\bibliography{dahlin}

\appendix
\section{Implementation details}
\label{app:impdetails}
This section discusses the details of the implementations used to generate the illustrations in Section~\ref{sec:results}.
These details are also found in the source code and supplementary material available at \url{https://www.github.com/compops/pmmh-qn}.

\subsection{Selecting step sizes}
\label{app:impdetails:step}
We make use of an adaptive method to select the step size in the qN proposal \eqref{eq:pmmh:qn:proposal}.
We follow \cite{AndrieuThoms2008} and adapt $\epsilon_k$ such that the acceptance rate is around some target acceptance rate $\alpha^{\star}$ using the update
\begin{align}
\log \epsilon_k
=
\log \epsilon_{k-1}
+
\gamma_k
\left(
\alpha_{k-1}
-
\alpha^{\star}
\right),
\label{eq:adapt:step:size}
\end{align}
where $\gamma_k$ is selected so that the Robbins-Monro conditions are fulfilled, e.g., by $\gamma_k = k^{-\eta}$ for some $0 < \eta < 1$.
Note that this choice fulfils the requirement of diminishing adaption to obtain a valid MCMC scheme.

\subsection{Selecting correlation and memory length}
\label{app:impdetails:random}
We generate $T=100$ samples from the random effects model with parameters $\{\mu, \sigma\} = \{1.0, 0.2\}$.
The priors for the parameters in the model are
\begin{align*}
	\mu \sim \mathcal{N}(0, 1),
	\quad
	\sigma \sim \mathcal{C}_+(0, 1),
\end{align*}
where $\mathcal{C}_+(a, b)$ denotes the half-Cauchy distribution (over the positive real axis) with location $a \in \mathcal{R}$ and scale $b > 0$.
A reparametrization of the model is done to make all the parameters in the Markov chain unconstrained (able to assume any real value) given by
\begin{align*}
	\sigma_v = \exp(\tilde{\sigma}_v),
\end{align*}
where $\tilde{\theta}=\{\mu, \tilde{\sigma}_v\}$ are the new states of the Markov chain.
This change of variables introduces a Jacobian term into the acceptance probability,
\begin{align*}
	\frac{\sigma_v'}{\sigma_{v, k-1}}.
\end{align*}

We initialise the Markov chain in the true parameters for simplicity.
The Hessian $0.1 \cdot \mathbf{I}_2$ is used to scale the random walk for the initialisation phase (the first $M$ iterations) and as $\Lambda_k$ during the burn-in.
The pmMH algorithm is run for $K=30,000$ iterations with the first $3,000$ iterations discarded as burn-in.
The entire burn-in is used to estimate the posterior covariance, which is used as $\Lambda_k$ for the remaining iterations.

The step sizes are adapted using \eqref{eq:adapt:step:size} with $\eta=0.5$ and initial step size $0.1$ (BFGS), $0.15$ (LS) and $0.25$ (SR1).
These are calibrated using pilot runs to obtain reasonable mixing.
We set $\lambda=0.1$ for the regularisation of LS.

The log-target and its gradients and Hessian are computed using a correlated importance sampler with the prior for the latent process as the instrumental distribution and $N=100$ particles. See the supplementary materials or \cite{DeligiannidisDoucetPitt2017} for the details..

\subsection{Logistic regression with sub-sampling}
\label{app:impdetails:higgs}
We make use of the first $110,000$ observations and $21$ covariates from the data set downloaded from:
\url{https://archive.ics.uci.edu/ml/datasets/HIGGS}.
An intercept $\beta_0$ is included in the model by adding an appropriate column/row in the design matrix.
The prior for the parameters is $\beta_l \sim \mathcal{N}(0, 1)$ and all parameters are initialised at zero.
Each sub-sample consists of $5\%$ of the data set, i.e., $5,500$ observations.
The Hessian $0.01 \cdot \mathbf{I}_{22}$ is used to scale the random walk for the initialisation phase (the first $M$ iterations) and as $\Lambda_k$ during the burn-in.

The pmMH algorithm using the same settings as for the random effects model but with $M=40$ and $\sigma_u=0.05$.
The pmMH0 proposal makes use of step size $0.27$ and the covariance matrix estimated using Algorithm~\ref{alg:pmmh:qn} with LS during an earlier run is used as $\widehat{\Sigma}$.
The pmMH2 proposal makes use of step size $0.5$.
The step size for the qN proposals are adapted as before but with initial step size $0.1$ target and  acceptance rates $0.2$ (BFGS), $0.2$ (LS) and $0.3$ (SR1).
The SGD estimates are obtained using Lightning for Python \url{http://contrib.scikit-learn.org/lightning/index.html}.

\subsection{Stochastic volatility model for Bitcoin data}
\label{app:impdetails:sv}
The log-returns for Bitcoin are computed by $y_t = 100 [ \log(s_{t}) - \log(s_{t-1}) ]$, where $s_t$ denotes the daily exchange rates versus the US Dollar obtained from \url{https://www.quandl.com/BITSTAMP/USD}.
The priors for the parameters in the model are
\begin{align*}
	&\mu \sim \mathcal{N}(0,1^2), \quad
	\phi \sim \mathcal{TN}_{(-1,1)}(0.95,0.05^2), \\
	&\sigma_v \sim \mathcal{G}(2,10), \quad
	\rho \sim \mathcal{TN}_{(-1,1)}(0,1),
\end{align*}
where $\mathcal{TN}_{(a,b)}(\mu, \sigma^2)$ denotes a truncated Gaussian distribution on $[a,b]$ with mean $mu$ and standard deviation $\sigma$ and $\mathcal{G}(a,b)$ denotes the Gamma distribution with mean $a/b$.
Again, we reparameterise the model to make all the parameters in the Markov chain unconstrained (able to assume any real value) by
\begin{align*}
	\phi = \text{tanh}(\tilde{\phi}), \quad \sigma_v = \exp(\tilde{\sigma}_v) , \quad \rho = \exp(\tilde{\rho}),
\end{align*}
where $\bar{\theta}=\{\mu, \tilde{\phi}, \tilde{\sigma}_v, \tilde{\rho}\}$ are the new states of the Markov chain.
This change of variables introduces a Jacobian term into the acceptance probability,
\begin{align*}
	\frac{1 - (\phi')^2}{1 - \phi_{k-1}^2}
	\frac{\sigma_v'}{\sigma_{v, k-1}}
	\frac{1 - (\rho')^2}{1 - \rho_{k-1}^2}.
\end{align*}

We initialise the Markov chain in $\theta_0 = \{2.0, 0.9, 0.4, -0.2\}$.
The Hessian $\mathsf{diag}(0.01, 0.01, 0.01, 0.001)$ is used to scale the random walk for the initialisation phase (the first $M$ iterations) and as $\Lambda_k$ during the burn-in.
The pmMH algorithm is run using the same settings as before but with $\sigma_u=0.5$.
The pmMH2 proposal makes use of step size $0.8$.
The step sizes for the qN proposals are adapted as before but with initial step size.

The log-target and its gradients are estimated using a fixed-lag particle smoother using correlated random numbers with lag $\Delta=10$.
The algorithm is given in the supplementary materials and \cite{DahlinLindstenSchon2015a} and $N=75$ particles are used which is approximates $T^{2/3}$ as recommended by \cite{DeligiannidisDoucetPitt2017}.

\section{Implementing correlated log-target estimators}
\label{app:correlated}
This section describes the implementations of the correlated importance sampling, sub-sampling and particle filtering algorithms.
These implementations are deterministic given the auxiliary variables $u$ given by the proposals in Section~3.1 in the main paper.

\begin{algorithm}[t]
    \caption{\textsf{Correlated importance sampling}}
    \footnotesize
	\textsc{Inputs:} $y$, $\theta$, $N$ and $u$. \\
	\textsc{Outputs:} $\log \widehat{\pi}^N(\theta | u)$ and $\widehat{\nabla}^N \log \pi(\theta | u)$. \\
	\textsc{Note:} all operations are carried out over $i = 1, \ldots, N$.
	\algrule[.4pt]
	\begin{algorithmic}[1]
		\FOR{$t=1$ to $T$}
			\STATE Simulate from the proposal by
			\begin{align*}
			x_t^{(i)} = \mu + \sigma^2 u^{(i)}_t,
			\end{align*} using random variables $u_{1:N, t}$.
			\STATE Calculate the weights by
			\begin{align*}
			w^{(i)}_t = \frac{W^{(i)}_t}{\sum_{j=1}^N W^{(j)}_t},
			\quad
			W^{(i)}_t = \mathcal{N} \left( y_t;x^{(i)}_t,\sigma_e^2 \right)
			\end{align*}
		\ENDFOR
		\STATE Estimate the log-target by \eqref{eq:app:llest}.
		\STATE Estimate the gradients of the log-target by \eqref{eq:app:random:effects:gradients}.
	\end{algorithmic}
	\label{alg:app:importance:sampling}
	\end{algorithm}

\subsection{Importance sampling for the random effects model}
Algorithm~3 in the main paper requires estimates of the log-posterior as well as its gradients.
The former can be computed by
\begin{align}
	\log \widehat{\pi}^N(\theta | y)
	&=
	\log p(\theta)
	+
	\sum_{t=1}^T
	\log
	\left[
	\sum_{i=1}^N
	w^{(i)}_t
	\right]
	- T \log N,
	\label{eq:app:llest}
\end{align}
where the weights $w^{(i)}_t$ are generated using the procedure outlined in Algorithm~\ref{alg:app:importance:sampling}.
In a similar manner, it is possible to compute the estimate of the gradients with respect to $\mu$ and $\tilde{\sigma}$ in the random effects model by
\begin{subequations}
\begin{align}
	\widehat{\nabla}^N_{\mu} \log \pi(\theta)
	&=
	\nabla \log p(\mu)
	+
	\widehat{\nabla}^N_{\mu} \ell(\theta), \\
	\widehat{\nabla}^N_{\tilde{\sigma}} \log \pi(\theta)
	&=
	\nabla \log p(\sigma)
	+
	\widehat{\nabla}^N_{\tilde{\sigma}} \ell(\theta), \\
	\widehat{\nabla}^N_{\mu} \ell(\theta)
	&=
	\sigma^2
	\sum_{i=1}^N
	\sum_{t=1}^T
	w^{(i)}_t
	\left(
		x^{(i)}_t
		-
		\mu
	\right), \\
	\widehat{\nabla}^N_{\tilde{\sigma}} \ell(\theta)
	&=
	\sum_{i=1}^N
	\sum_{t=1}^T
	w^{(i)}_t
	\left\{
	\sigma^{-2}
	\left(
		x^{(i)}_t
		-
		\mu
	\right)^2
	-
	1
	\right\},
\end{align}%
\label{eq:app:random:effects:gradients}%
\end{subequations}%
\noindent where $\ell(\theta) \triangleq \log p(y | \theta)$ denotes the log-likelihood function.
Note that the reparameterisation of $\tilde{\sigma} = \log(\sigma)$ is taken into account in the computation of the gradient.
Here, the particles $x^{(i)}_t$ are generated using the procedure outlined in Algorithm~\ref{alg:app:importance:sampling}.

\begin{algorithm}[t]
    \caption{\textsf{Correlated stratified sub-sampling}}
    \footnotesize
	\textsc{Inputs:} $T$, $N$ and $u$. \\
	\textsc{Outputs:} $\log \widehat{\pi}^N(\theta | u)$, $\widehat{\nabla}^N \log \pi(\theta | u)$ and $\widehat{\nabla}^{2,N} \log \pi(\theta | u)$.
	\algrule[.4pt]
	\begin{algorithmic}[1]
		\STATE Set $j = 1$ and $\mathcal{D} = \emptyset$.
		\FOR{$t=1$ to $T$}
			\STATE Set $\tilde{w}_t = t / T$.
		\ENDFOR
		\FOR{$i=1$ to $N$}
			\WHILE{$\tilde{w}_j < (u_i + i - 1)/N$ and $j < T$}
			\STATE Set $j = j + 1$.
			\ENDWHILE
			\STATE Set $\mathcal{D} = \mathcal{D} \cup \{y_j, X_i\}$.
		\ENDFOR
		\STATE Estimate the log-target by \eqref{eq:app:logistic:ll}.
		\STATE Estimate the gradients of the log-target by \eqref{eq:app:logistic:gradient}.
		\STATE Estimate the Hessian of the log-target by \eqref{eq:app:logistic:hessian}.
	\end{algorithmic}
	\label{alg:app:subsamp}
\end{algorithm}

\subsection{Sub-sampling for logistic regression model}
The log-target for the logistic regression model is given by
\begin{align}
	\log \pi(\theta)
	&=
	\log p(\theta)
	+
	\sum_{t=1}^T
	y_t \log p_t
	+
	(1 - y_t)
	\log( 1 - p_t),
	\label{eq:app:logistic:ll}
	\\
	p_t
	&=
	\left[ 1 + \exp(- \beta X^{\top}_t) \right]^{-1},
	\nonumber
\end{align}
where $\beta$ and $X_t$ denotes the model parameters and the regressors at time $t$, respectively.
The gradients of the log-target given a subset of the data $\mathcal{D}$ can be computed by
\begin{align}
	\nabla_{\beta_l} \log \pi(\theta)
	&=
	\nabla_{\beta_l} \log p(\theta) + \nabla_{\beta_l} \ell(\theta),
	\label{eq:app:logistic:gradient}
	\\
	\nabla_{\beta_l} \ell(\theta)
	&=
	\sum_{y_i, X_i \in \mathcal{D}}
	y_i
	G_i
	-
	(1 - y_i)
	G_i
	\exp(\beta X^{\top}_i)
	,
	\nonumber \\
	G_i
	&=
	x_{i, l} \left[ 1 + \exp(\beta X^{\top}_i) \right]^{-1}.
	\nonumber
\end{align}
In a similar manner, the Hessian of the log-target is given by
\begin{align}
	\nabla_{\beta_l}^2 \log \pi(\theta)
	&=
	\nabla^2_{\beta_l} \log p(\theta) + \nabla^2_{\beta_l} \ell(\theta),
	\label{eq:app:logistic:hessian}
	\\
	\nabla^2_{\beta_l} \ell(\theta)
	&=
	\sum_{y_i, X_i \in \mathcal{D}}
	\left[
	y_i H_{i, 1}
	+
	(1 - y_i) H_{i, 0}
	\right]
	X_i X_i^{\top}
	,
	\nonumber \\
	H_{i, 1}
	&=
	-
	\left[1 + \exp(\beta X^{\top}_i) \right]^{-2}
	\exp(\beta X^{\top}_i),
	\nonumber \\
	H_{i, 0}
	&=
	-
	\left[1 + \exp(-\beta X^{\top}_i) \right]^{-2}
	\exp(-\beta X^{\top}_i).
	\nonumber
\end{align}
The subset $\mathcal{D}$ is generated by Algorithm~\ref{alg:app:subsamp}.
Recall that second-order Gaussian proposal (15 in main paper) makes use of the negative inverse Hessian of the log-target.

\subsection{Particle filtering for stochastic volatility model}
We make use of a bootstrap particle filter (bPF; \citealp{DoucetJohansen2011}) to estimate the log-likelihood for the stochastic volatility model.
The gradient can then be estimated using a standard fixed-lag particle smoother (flPS; \citealp{OlssonCappeDoucMoulines2008}) using the output from the bPF.
The complete algorithm for flPS is available in e.g., \cite[Algorithm~2]{DahlinLindstenSchon2015a}.
The correlated version of bPF is given in Algorithm~\ref{alg:app:bpf}.

Hence, the log-likelihood is computed using the same expression \eqref{eq:app:llest} as for importance sampling but the particles $x^{(i)}_t$ are instead generated by sequential importance sampling with resampling.
Note that we are required to sort the particles after the propagation step, see the discussion about smooth particle filter by \cite{MalikPitt2011} for more details.

We make use of the probability transform to generate the uniform random variables required for the systematic resampling step.
Hence, $u$ is a $(N+1) \cdot (T+1)$-variate Gaussian random variable, where $u_{2:N+1,t}$ is used directly in the propagation step and $u_{1,t}$ is used in the resampling step after a transformation into a uniform random variable.

\begin{algorithm}[t]
    \caption{\textsf{Correlated bootstrap particle filter}}
    \footnotesize
	\textsc{Inputs:} $y$, $\theta$, and $u$. \\
	\textsc{Outputs:} $\log \widehat{\pi}^N(\theta|u)$. \\
	\textsc{Note:} all operations are carried out over $i,j = 1, \ldots, N$.
	\algrule[.4pt]
	\begin{algorithmic}[1]
		\STATE Sample the initial particles and set equal weights by
		\begin{align*}
			x^{(i)}_0
			=
			\mu
			+
			\frac{\sigma_v}{1 - \phi^2}
			u_{i+1, 1},
			\quad
			w_0^{(i)}=N^{-1}.
		\end{align*}
		\FOR{$t=1$ to $T$}
			\STATE Apply the CDF approach to transform $u_{1, t+1}$ into a uniform random number $\bar{u}$.
			\STATE Apply systematic resampling with $\bar{u}$ as the random variable to sample the ancestor indices $a^{(1:N)}_t$.
			\STATE Propagate the particles by computing
			\begin{align*}
				x_t^{(i)}
				&=
				\mu_{t}^{(i)}
				+
				\sigma_{t}^{(i)}
				u_{i+1, t+1},
				\\
				\mu_{t}^{(i)}
				&=
				\mu
				+
				\phi( x_{t-1}^{a^{(i)}_t} - \mu )
				+
				\sigma_v
				\rho
				\exp(-y_{t-1}/2),
				\\
				\sigma_t^{(i)}
				&=
				\sigma_v
				\sqrt{1 - \rho^2}.
			\end{align*}
			\STATE Extend the trajectory by $x_{0:t}^{(i)} = \Big\{ x_{0:t-1}^{a_t^{(i)}}, x_{t}^{(i)} \Big\}$ and sort them in ascending order according to $x_t^{(i)}$.
			\STATE Calculate the particle weights by
			\begin{align*}
				w^{(i)}_t
				=
				\frac{W^{(i)}_t}{\sum_{j=1}^N W^{(j)}_t},
				\quad
				W^{(i)}_t
				=
				\mathcal{N}(y_t; 0, \exp(x_t)).
			\end{align*}
		\ENDFOR
		\STATE Estimate $\widehat{p}_{\theta}(y;u)$ by \eqref{eq:app:llest}.
	\end{algorithmic}
	\label{alg:app:bpf}
\end{algorithm}

The gradient of the log-target can be computed using Fisher identity \citep{CappeMoulinesRyden2005},
\begin{align*}
	G(\theta)
	&=
	\nabla
	\log
	\pi(\theta)
	=
	\nabla
	\log
	p(\theta)
	+
	\nabla\,
	\mathbb{E}_{\theta}
	\left[
		\log p(x, y | \theta)
	\Big|
	y
	\right].
\end{align*}
This requires the gradients of the logarithm of the joint distribution of states and observations $\log p(x, y | \theta)$ given by
\begin{align*}
	\nabla
	\log p(x, y | \theta)
	=
	\sum_{t=1}^T
	\left[
		\nabla
		\log
		p(x_t | x_{t-1}, \theta)
		+
		\nabla
		\log
		p(y_t | x_t, \theta)
	\right],
\end{align*}
if the distribution of the initial state is independent of the parameters $\theta$.
For the stochastic volatility model, these expressions are given by
\begin{align*}
	\nabla_{\mu}
	\log p(x, y | \theta)
	&=
	\sum_{t=1}^T
	Q
	G_t
	(1 - \phi), \\
	\nabla_{\tilde{\phi}}
	\log p(x, y | \theta)
	&=
	\sum_{t=1}^T
	Q
	G_t
	(1 - \phi^2), \\
	\nabla_{\tilde{\sigma}_v}
	\log p(x, y | \theta)
	&=
	\sum_{t=1}^T
	\Big\{
	Q
	G_t
	\Big[
		G_t
	+
	\sigma_v \rho \exp(-0.5 x_{t-1}) y_{t-1}
	\Big]
	-
	1 \Big\},
	\\
	\nabla_{\tilde{\rho}}
	\log p(x, y | \theta)
	&=
	\sum_{t=1}^T
	\Big\{
	\rho - Q \rho G_t^2 +
	\sigma_v^{-1} G_t \exp(-0.5 x_{t-1}) y_{t-1} \Big\},
\end{align*}
where the following notation has been introduced for brevity
\begin{align*}
	G_t
	&=
	x_t
	-
	\mu
	-
	\phi(x_{t-1} - \mu)
	-
	\rho \sigma_v \exp(-0.5 x_{t-1}) y_{t-1}, \\
	Q
	&=
	\sigma_v^{-2}
	(1 - \rho^2)^{-1}.
\end{align*}
Note that we have taken into account the reparameterisation of the model when computing the gradients.
Recall that $\tilde{\phi} = \text{atanh}(\phi)$, $\tilde{\sigma}_v = \log(\sigma_v)$ and $\tilde{\rho} = \log(\rho)$.

The Hessian can be computed using the smoother and the Louis identity \citep[p.\ 352]{CappeMoulinesRyden2005} in a similar manner.
However, the second derivatives are required for the implementation, which are a bit more involved to compute.
Please have a look at the source code for the details.

\clearpage
\section{Additional results}
This section contains some additional results for the examples in the main paper.

\begin{figure}[p]
	\centering
	\includegraphics[width=\textwidth]{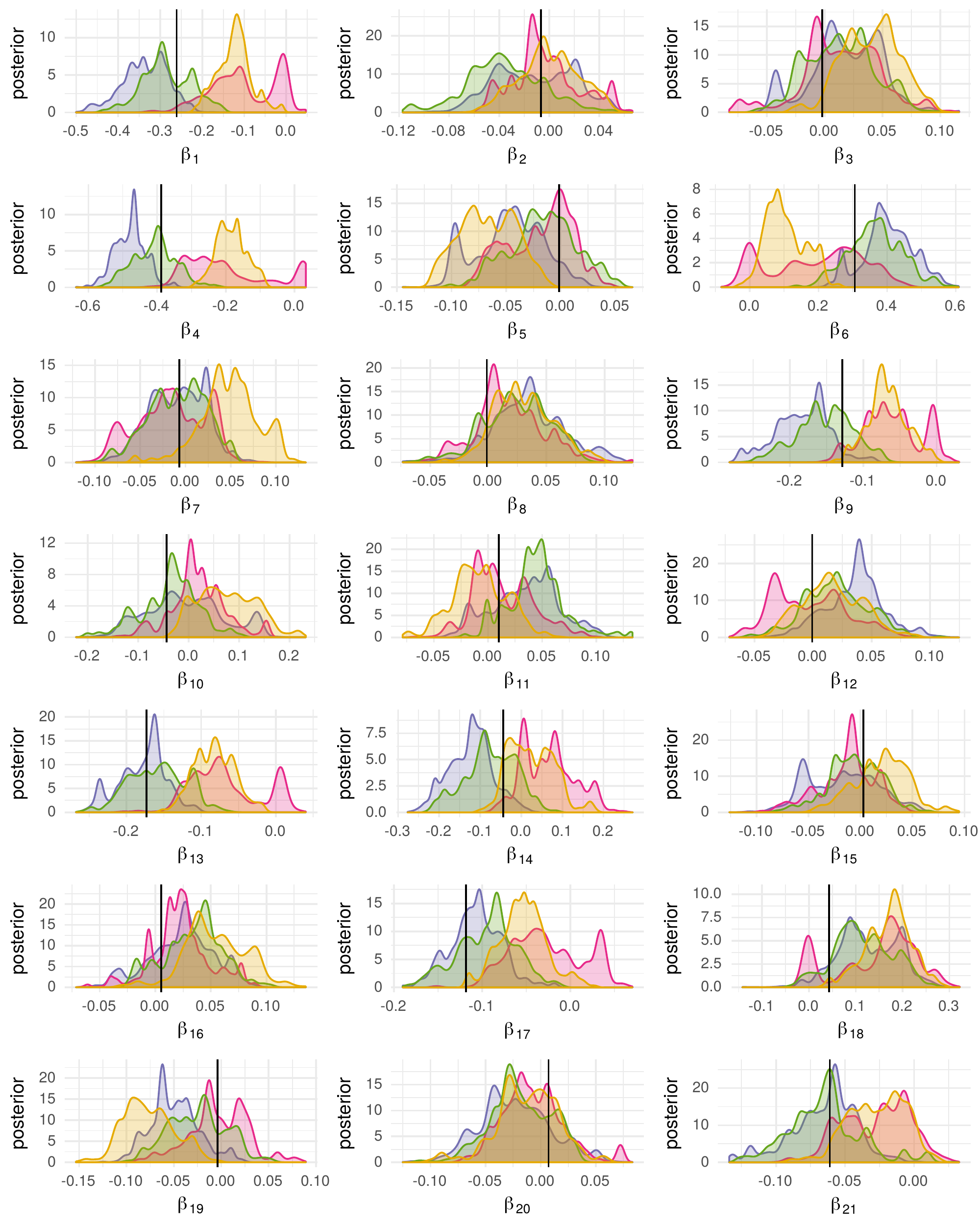}
	\caption{\footnotesize The posterior estimates for $\beta_{1:21}$ obtained by pmMH2 (purple) and Algorithm~3 in the main paper using BFSG (magenta), LS (green) and SR1 (yellow) in the logistic regression model using the Higgs data.}
	\label{fig:example2b-higgs-supp-post}
\end{figure}

\begin{figure}[p]
	\centering
	\includegraphics[width=\textwidth]{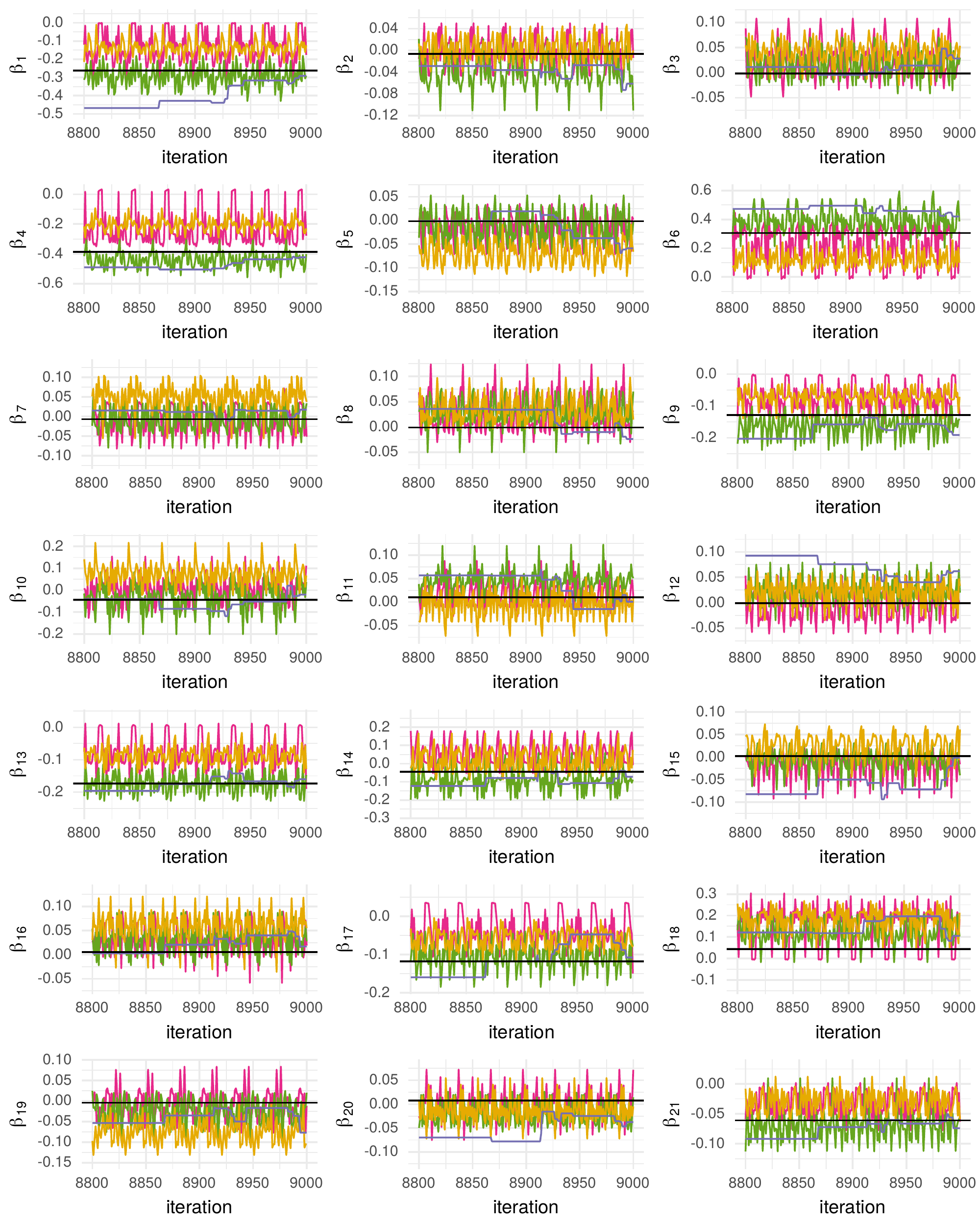}
	\caption{\footnotesize Trace plots for $\beta_{1:21}$ obtained by pmMH2 (purple) and Algorithm~3 in the main paper using BFSG (magenta), LS (green) and SR1 (yellow) in the logistic regression model using the Higgs data.}
	\label{fig:example2b-higgs-supp-trace}
\end{figure}

\begin{figure}[p]
	\centering
	\includegraphics[width=\textwidth]{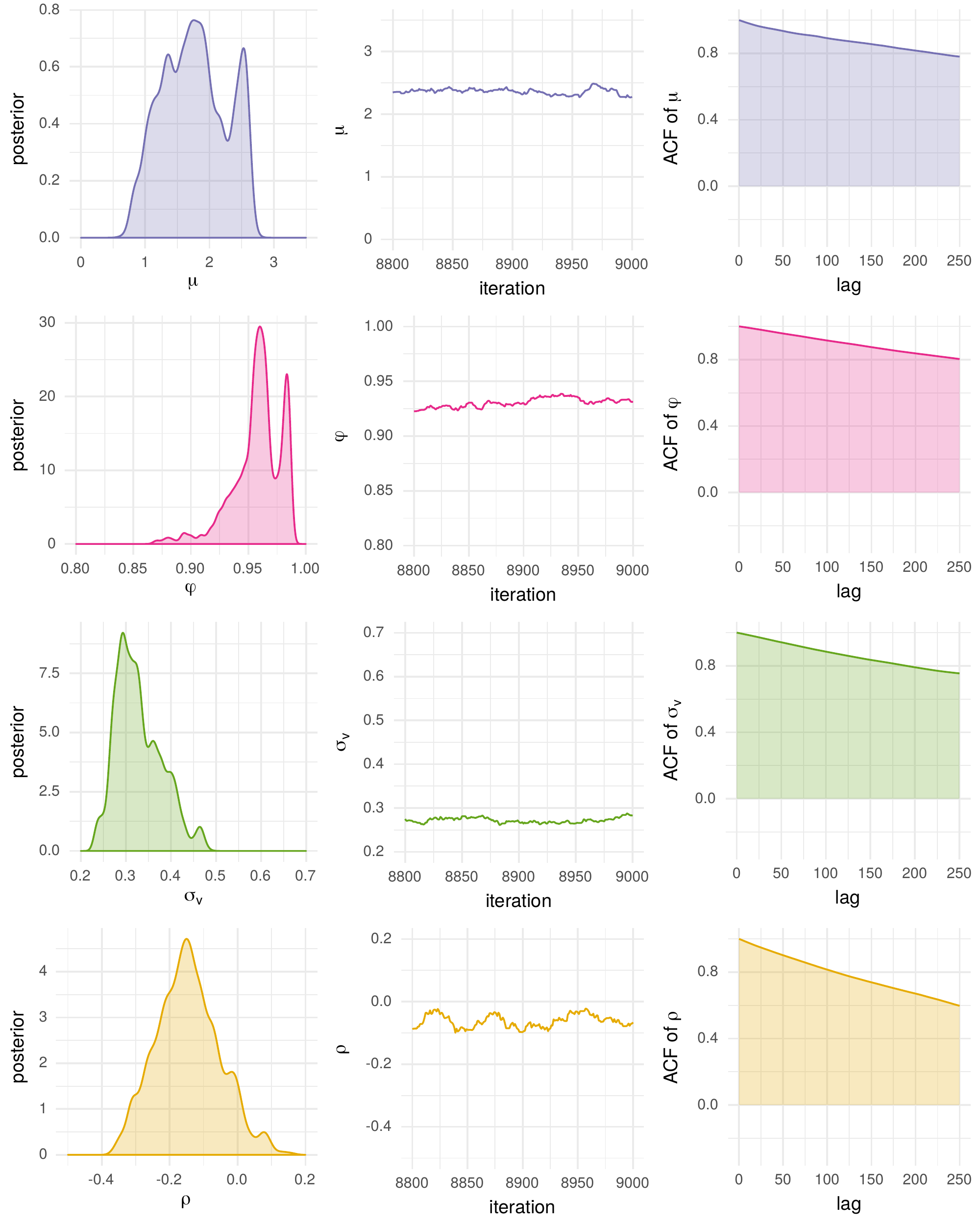}
	\caption{\footnotesize Posterior estimates from 15,000 samples (left), trace plots (middle) and empirical ACF (right) from pmMH2 for $\mu$ (purple), $\phi$ (magenta), $\sigma_v$ (green) and $\rho$ (yellow) in the stochastic volatility model using the Bitcoin data.}
	\label{fig:example3-stochastic-volatility-supp-mh2}
\end{figure}

\begin{figure}[p]
	\centering
	\includegraphics[width=\textwidth]{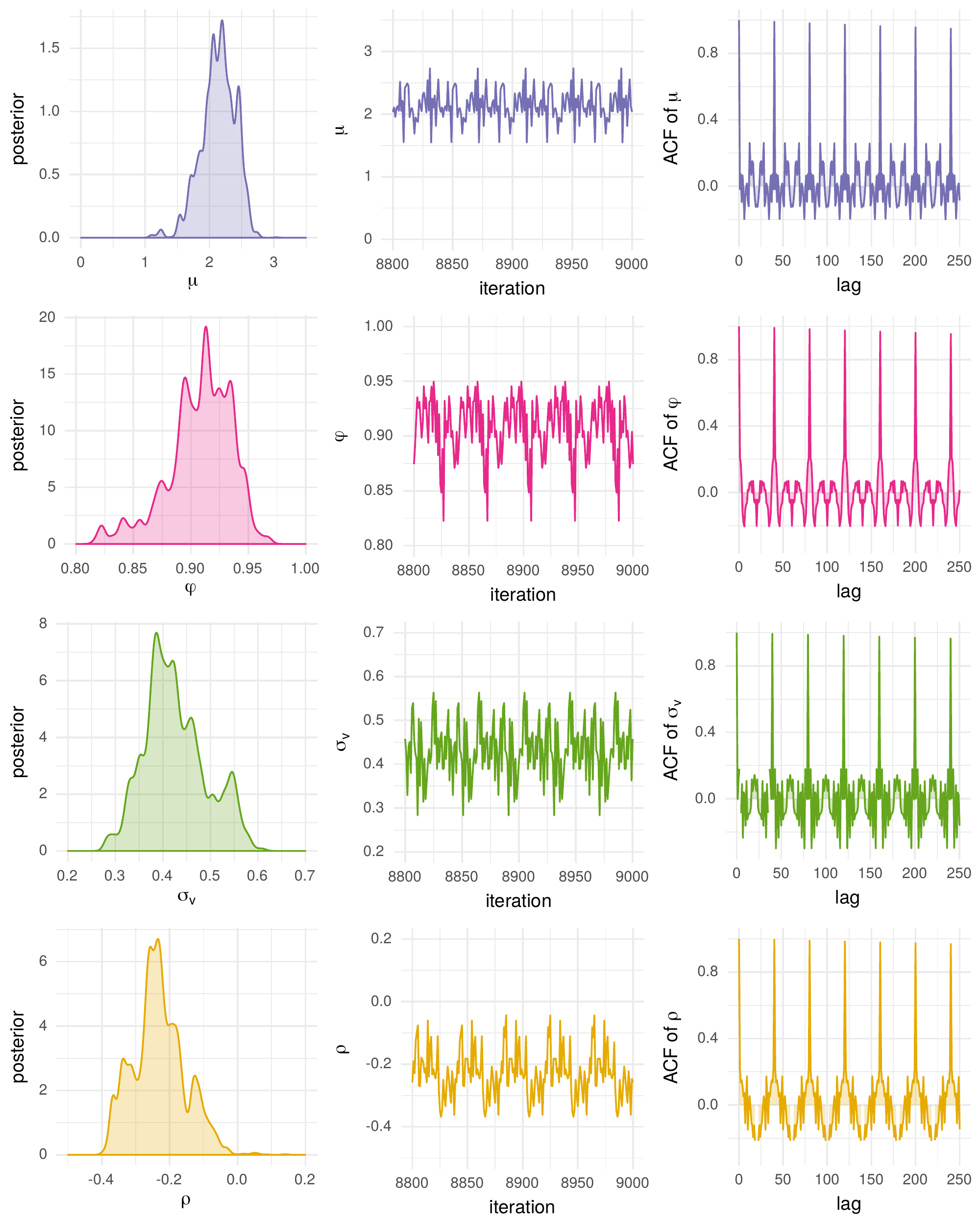}
	\caption{\footnotesize Posterior estimates from 15,000 samples (left), trace plots (middle) and empirical ACF (right) from Algorithm~3 in the main paper using BFGS updates for $\mu$ (purple), $\phi$ (magenta), $\sigma_v$ (green) and $\rho$ (yellow) in the stochastic volatility model using the Bitcoin data.}
	\label{fig:example3-stochastic-volatility-supp-bfgs}
\end{figure}

\begin{figure}[p]
	\centering
	\includegraphics[width=\textwidth]{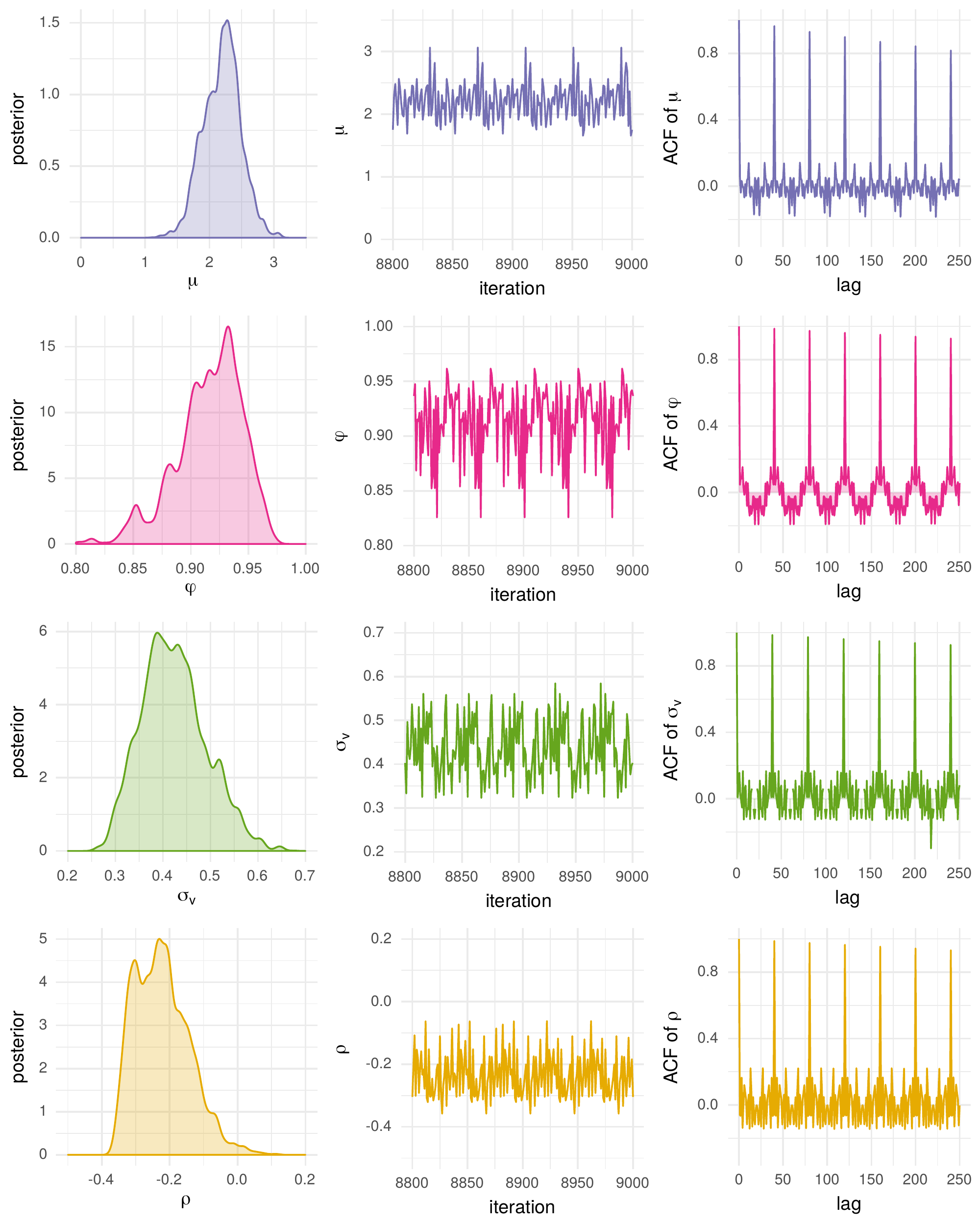}
	\caption{\footnotesize Posterior estimates from 15,000 samples (left), trace plots (middle) and empirical ACF (right)  from Algorithm~3 in the main paper using LS updates for $\mu$ (purple), $\phi$ (magenta), $\sigma_v$ (green) and $\rho$ (yellow) in the stochastic volatility model using the Bitcoin data.}
	\label{fig:example3-stochastic-volatility-supp-ls}
\end{figure}

\begin{figure}[p]
	\centering
	\includegraphics[width=\textwidth]{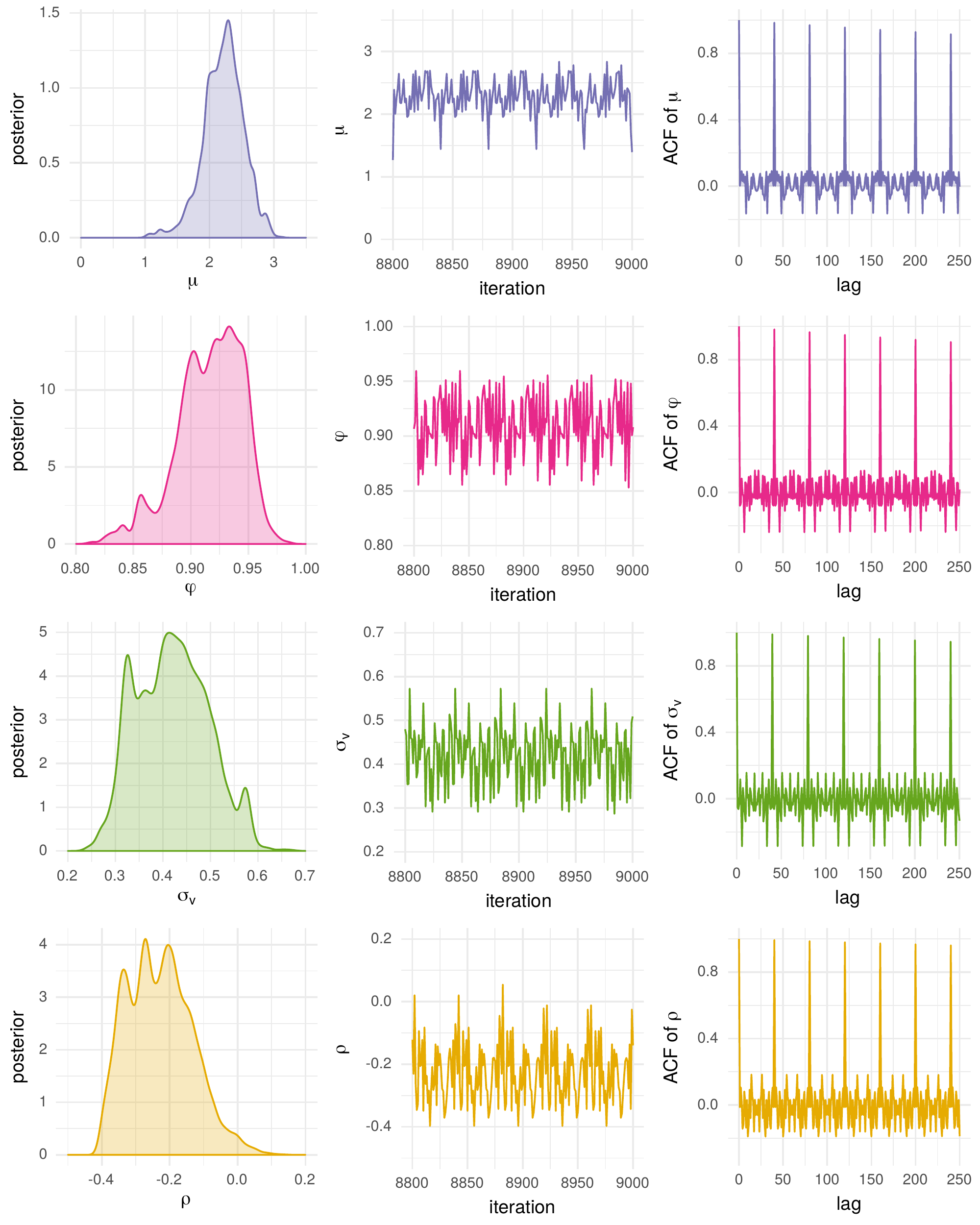}
	\caption{\footnotesize Posterior estimates from 15,000 samples (left), trace plots (middle) and empirical ACF (right)  from Algorithm~3 in the main paper using SR1 updates for $\mu$ (purple), $\phi$ (magenta), $\sigma_v$ (green) and $\rho$ (yellow) in the stochastic volatility model using the Bitcoin data.}
	\label{fig:example3-stochastic-volatility-supp-sr1}
\end{figure}

\end{document}